\title{Efficient Algorithms for Electing Successive Committees}
\author[1]{Pallavi Jain}
\author[2]{Andrzej Kaczmarczyk}
\date{}
\affil[1]{\small Indian Institute of Technology, Jodhpur, India}
\affil[2]{\small Department of Computer Science, The University of Chicago, Chicago,
United States}
\affil[ ]{\small \textit{pallavi@iitj.ac.in, akaczmarczyk@uchicago.edu}}
\newcommand{\appsymb}{$\bigstar$}
\newcommand{\orderedlistingof}[2]{\ensuremath{#1_1, #1_2, \ldots, #1_#2}}
\newcommand{\orderedsetof}[2]{\ensuremath{\{\orderedlistingof{#1}{#2}\}}}
\newcommand{\namedorderedsetof}[3]{\ensuremath{{#1}=\orderedsetof{#2}{#3}}}
\newcommand{\wsetpacking}{\textsc{Weighted Set Packing}}
\newcommand{\yes}{\textsf{\small yes}}
\newcommand{\no}{\textsf{\small no}}
\newcommand{\true}{\textsf{\small true}}
\newcommand{\false}{\textsf{\small false}}
\newcommand{\failure}{\textsf{\small failure}}
\newcommand{\pnamesuffixfull}[2]{\textsc{Succesive Committees Election}}
\newcommand{\pname}[2]{\textsc{\ensuremath{#1}-\ensuremath{#2}-SCE}}
\newcommand{\pgeneralnamefull}{\textsc{$\alpha$-$\beta$-\pnamesuffixfull}}
\newcommand{\pgeneralname}{\textsc{$\alpha$-$\beta$-SCE}}
\newcommand{\np}{\ensuremath{\mathrm{NP}}}
\newcommand{\fpt}{\ensuremath{\mathrm{FPT}}\xspace}
\newcommand{\wtwo}{\ensuremath{\mathrm{W[2]}}}                                   
\newcommand{\w}{\ensuremath{\mathrm{W}}}
\newcommand{\OO}{\mathcal{O}}
\newcommand{\nphard}{{\np-hard}\xspace}
\newcommand{\naturals}{\mathbb{N}}
\definecolor{darkgreen}{rgb}{0,0.5,0}
\newtheorem{theorem}{Theorem}
\newtheorem{corollary}{Corollary}
\newtheorem{lemma}{Lemma}
\newtheorem{proposition}{Proposition}
\newtheorem{definition}{Definition}
\newenvironment{proofsketch}{%
  \proof}{\endproof}
\DeclareMathOperator{\score}{sc}
\DeclareMathOperator{\egal}{egal}
\DeclareMathOperator{\util}{util}
\DeclareMathOperator{\pos}{pos}
\DeclareMathOperator{\beg}{beg}
\DeclareMathOperator{\head}{head}
\DeclareMathOperator{\tail}{tail}
\DeclareMathOperator{\cc}{CC}
\DeclareMathOperator{\approvalScore}{App}
\DeclareMathOperator{\egalCCScoreA}{trCC}
\DeclareMathOperator{\egalCCScoreO}{eCC}
\DeclareMathOperator{\coverageScore}{AppCC}
\DeclareMathOperator{\utilOp}{util}
\DeclareMathOperator{\egalOp}{egal}
\newcommand{\WSFamily}{\ensuremath{\mathcal{F}_{\text{WS}}}}
\newcommand{\Co}[1]{\mathcal{#1}}
\DeclareMathOperator{\reqqual}{rq}
\newcommand{\colorful}{{{\color{red}c}{\color{blue}o}{\color{olive}l}{\color{violet}o}r{\color{orange}f}{\color{purple}u}{\color{teal}l}}\xspace}
\newcommand{\candidates}{\ensuremath{C}}
\newcommand{\candidatesnr}{\ensuremath{m}}
\newcommand{\voters}{\ensuremath{V}}
\newcommand{\votersnr}{\ensuremath{n}}
\newcommand{\approvals}{\ensuremath{A}}
\newcommand{\ranking}{\ensuremath{\succ}}
\newcommand{\series}{\ensuremath{\mathcal{W}}}
\newcommand{\timelimit}{\ensuremath{\tau}}
\newcommand{\vallowerbound}{\ensuremath{\eta}}
\newcommand{\commsize}{\ensuremath{k}}
\newcommand{\candquota}{\ensuremath{f}}
\newcommand{\committeesSeries}{\ensuremath{\mathcal{S}}}
\newcommand{\PSdivision}{\ensuremath{\mathcal{D}}}
\newcommand{\PSdivsets}{\ensuremath{\mathcal{S}}}
\newcommand{\PSdivfil}{\ensuremath{\mathcal{F}}}
\newcommand{\PSdivisionSize}{\ensuremath{d}}
\newcommand{\PSorder}{\ensuremath{\rho}}
\newenvironment{problemQuote}%
  {\list{}{\leftmargin=0.04in\rightmargin=0.00in}\item[]}%
  {\endlist}
\newcommand{\probDef}[3]{%
 \begin{problemQuote}
  #1\\
  \textbf{Input:} #2\\
  \textbf{Question:} #3%
 \end{problemQuote}
}
\DeclareRobustCommand{\abbrevcrefs}{%
  \Crefname{theorem}{Thm.}{Thms.}%
  \Crefname{example}{Ex.}{Exs.}%
  \Crefname{proposition}{Pr.}{Prs.}%
  \Crefname{corollary}{Cor.}{Cors.}%
}
\DeclareRobustCommand{\Shcref}[1]{{\abbrevcrefs\Cref{#1}}}
\begin{document}
\maketitle 

\begin{abstract}
	In a recently introduced model of \emph{successive committee elections}
	(Bredereck et al., AAAI-20) for a given set of ordinal or approval preferences
	one aims to find a sequence of a given length of ``best'' same-size
	committees such that each candidate is a member of a limited number of
	consecutive committees. However, the practical usability of this model remains
	limited, as the described task turns out to be NP-hard for most selection
	criteria already for seeking committees of size three. Non-trivial or somewhat
	efficient algorithms for these cases are lacking too. Motivated by a desire to
	unlock the full potential of the described temporal model of committee
	elections, we devise (parameterized)
  algorithms that effectively solve the mentioned hard cases in realistic
  scenarios of a moderate number of candidates or of a limited time horizon.
\end{abstract}

\section{Introduction}

A non-profit organization (NPO) offers a 3-day personal development workshop for
teenagers in a remote location. Featured activities
include discussion sessions 
with three expert counselors who
tackle the participants' questions in various topics of developing self
awareness.
Potential
counselors
agreed to participate in 
such 
discussions for at most two consecutive days
to avoid excessive traveling and fatigue. 
Every counselor specializes in a limited selection of topics regarding self
awareness.
The NPO wants to select three groups of three counselors, one group per day, to
offer the participants as broad experience as possible. Hence, the selected
groups must obey the counselors' consecutiveness requirement and also
guarantee a diverse selection of topics covered each day by the respective
counselor pair.

The NPO's task seemingly can be modeled as a~\emph{multiwinner voting}
task~\citep{fal-sko-sli-tal:b:multiwinner-voting,lac-sko:b:multiwinner-approval}.
Identifying counselors as candidates and each subarea as a voter, who approves
for the candidates representing counselor's expertise, we want to select a
diverse group of candidates. However, the classical model neglects the temporal
aspect of the problem and thus does not align well with the task. Precisely, it
fails to group the counselors in teams of three to serve on the three days of
the workshop while adhering to the consecutiveness requirement.

The described shortcoming has recently been
addressed by~\citet{bre-kac-nie:c:successive-committees}, who proposed a
suitable framework of~\emph{successive committee elections}. Here, based on a
collection of votes, one selects a collection of multiple, ordered,
same-size groups, called \emph{committees}, of ``best'' candidates.
In line with our toy example, each candidate in the selected committees must be
a member of a single contiguous block of at most a given number of committees.
Besides introducing the new model, the authors have studied associated problems
through algorithmic lens.
They studied four successive committee rules focused on maintaining the
diversity of the chosen committees based on the Chamberlin--Courant
rule~\citep{CC83}. Further, they considered extensions of the widespread
approval voting rule and weakly-separable scoring rules.
Their study identified cases solvable in polynomial time
mostly related to finding series of committees that consist of
two candidates. While they showed that with very few exceptions the same task
for committees of size three or more is computationally hard, they abstained
from providing algorithms dealing with such cases. 

Motivated by the desire of providing means of computing successive committees
for criteria considered by~\citet{bre-kac-nie:c:successive-committees},
we sidestep their hardness results
by applying advanced algorithmic methods from parameterized complexity theory.
We analyze the same set of rules they did and contribute several positive
algorithmic results, as shown in~\Cref{tab:results}. Consequently,
\begin{enumerate*}[label={\small(\arabic*)}, itemjoin={{, }}, itemjoin*={{,
	and }}]
  \item we provide the first algorithms that deal with the hard scenarios for
    committee size bigger than two, including cases focusing on diverse
    committees similar to our introductory example
  \item our methods are furthermore computationally efficient (in the
    parameterized sense) for cases with a small number of candidates or
    committees to be selected.
\end{enumerate*}
The mentioned cases of interest seem quite likely in practice. Indeed, planning
too far ahead usually bears unacceptable risks (like unpredictable changes of
voters preferences or dropping out of candidates). Too big pool of candidates is
often undesired due to human perception limitation and is usually avoided by
shortlisting (in a broad sense, e.g.,\ by requiring petition signatures,
deliberation processes, organizing pre-selection). 
As our algorithmic contributions extend the applicability of successive
committee elections, we directly respond to the call
of~\citet{boe-nie:c:broadening-agenda-comsoc} to consider different paths of
tractability of new models\footnote{Their taxonomy classifies successive
committee elections as the setting of  \textbf{O}rdered \textbf{O}ne profile
\textbf{M}ultiple solutions (O-OM).} better capturing \emph{the changing nature
of real-world problems}.

Some other models
incorporate the time aspect into classical committee elections additionally
allowing votes to change their vote over time.
\citet{bre-flu-kac:c:votes-change-committees-not} introduce such a model and
analyze the (parameterized) computational complexity of several related questions.
Their results, complemented with those
of~\citet{kel-ren-zsc:c:parameterized-diverse-multistage}, offer a comprehensive
computational landscape.
Importantly, their model does not generalize the one we
consider, as their requirements for the outcome on a series of committees are
focused on (dis)similarity of neighboring committees.
\citet{del-flu-bre:c:egalitarian-equitable-sequences-of-committees} deepen the study of the model of \citet{bre-flu-kac:c:votes-change-committees-not} 
by studying new questions related mostly
to the fairness of the outcomes towards the voters.
The
dynamic nature of preferences considered
by the three listed works
makes approaches therein
impossible
to adapt
to our problems.
In the light of the mentioned literature, our work takes a significant step
towards addressing the somewhat neglected algorithmic study of successive
committee elections.

Another related scenario of time-dependent voting where the task is to
select a single candidate for each time step can be seen as a series of
committees of size one. The literature on this model mostly focuses on
proportionality and fairness from
axiomatic~\citep{lac:c:perpetual-voting,lac-mal:x:perpetual-axiomatic-lens,cha-goe-pet:c:proportional-aggregation-for-sequential-dm},
algorithmic~\citep{elk-obr-teh:c:proportionality-in-temporal-voting,bul-haz-pag-ros-tal:a:jr-for-perpetual}
and experimental
perspectives~\citep{bul-haz-pag-ros-tal:a:jr-for-perpetual,cha-goe-pet:c:proportional-aggregation-for-sequential-dm}.
See a survey by~\citet{elk-obr-teh:c:temporal-fairness} for more details
on this topic.

\section{Preliminaries}\label{sec:prelim}
 For a positive integer~$x$, we use $[x]$ to denote set $\{1, 2, \ldots, x\}$.
 An \emph{election}~$E = (\candidates, \voters)$ consists of
 $\candidatesnr$~candidates~$\candidates = \{c_1, c_2, \ldots,
 c_\candidatesnr\}$ and a collection~$\voters = \{v_1, v_2, \ldots,
 v_\votersnr\}$ of~\votersnr{}~voters. We study \emph{approval} and
 \emph{ordinal} preferences. In the former preference type, we associate a
 voter~$v_i \in \voters$ with their \emph{approval set}~$\approvals(i)$ of
 candidates that $v_i$ approves. In the ordinal preferences model each voter~$v_i$
 ranks all candidates and so is identified with their (total and strict)
 \emph{preference order}~$\ranking_i$. We denote by~$\pos_i(c)$ a position of
 candidate~$c \in \candidates$ in some voter~$v_i$'s ranking~$\ranking_i$.
 There are multiple mathematical ways of relating ordinal preferences to
 approval preferences.\footnote{For example: approvals are either complete
 rankings with ties or can be constructed from rankings by letting each voter
 approve a number of their top candidates.} However, they all require
 decisions somewhat arbitrary from the practical perspective.
 
 \paragraph{Fixed-Parameter Tractability.}
 We say a computational problem is fixed-parameter tractable for some
 parameter~$x$ (being a part of the input) if there is an (parameterized) algorithm solving
 every instance~$\mathcal{I}$ in time~$\OO(f(x)|\mathcal{I}|^c)$ for some
 constant~$c$. We call such an algorithm an FPT($x$)-algorithm.
 Under standard computational complexity assumptions, 
 fixed-parameter tractability for some parameter~$x$ is excluded when the
 problem is $\w{}\textrm{[}t\textrm{]}$-hard, $t \in \naturals$ with respect to~$x$
 or when the problem is NP-hard for a fixed value of~$x$.

 \paragraph{Committee Series Quality.}
 Let us fix a \emph{committee scoring function}~$\score \colon 2^C \rightarrow
 \mathbb{N}$, which assigns a nonnegative natural~\emph{committee score} to each
 \emph{committee}~$W \subseteq C$. Given an (ordered) \emph{committee
 series}~$\series = (W_1, W_2, \ldots, W_\timelimit)$ of~$\timelimit$ same-sized
 committees, $\util(\series) \coloneqq \sum_{i\in[\timelimit]} \score(W_i)$ is the
 \emph{utilitarian committee series quality} of~$\series$ and $\egal(\series) \coloneqq
 \min_{i\in[\timelimit]} \score(W_i)$ is its \emph{egalitarian committee series
 quality}. We study \emph{consecutive $f$-frequency committee
 series}. In such series, each candidate participates in at most~$f$
 consecutive committees of a series.

 \paragraph{Ordinal committee scoring functions.}
 Let~$E = (\candidates, \voters)$ be an arbitrary ordinal election with
 $\candidatesnr$~candidates, $\votersnr$~voters, and
 $\namedorderedsetof{W}{w}{k} \subseteq \candidates$ be a committee.
 Following~\citet{bre-kac-nie:c:successive-committees} we consider three
 (families of) ordinal committee scoring functions
 \begin{enumerate*}[label={\small(\arabic*.)}, itemjoin={{, }}, itemjoin*={{,
	 and }}]
	 \item \emph{Chamberlin-Courant} ($\cc$)
	 \item \emph{egalitarian Chamberlin-Courant} ($\egalCCScoreO$)
	 \item \emph{weakly separable scoring functions (which we denote collectively by~$\WSFamily$)}
 \end{enumerate*}
 formally defined as:
 \begin{enumerate}[label={\small(\arabic*)}, left=9pt]
	 \item $\cc(W) \coloneqq
     \sum_{i\in [\votersnr]} \max_{w \in W} \left(m - \pos_i(w) \right)$
	 \item $\egalCCScoreO(W) \coloneqq
		 \min_{i\in [\votersnr]} \max_{w \in W} \left(m -
		 \pos_i(w) \right)$
	 \item a function~$Q$ is weakly separable, i.e.,\ $Q \in \WSFamily$, if it
     can be associated with some function~$\phi \colon [\candidatesnr] \rightarrow
     \naturals_0$ such that~$Q(W) = \sum_{i\in [\votersnr]}\sum_{w\in W}
		 \phi(\pos_i(w))$
 \end{enumerate}
 The family of weakly-separable functions is very general. Among others, it
 includes such prominent voting rules as Plurality or Borda. Their respective
 $\phi$ functions are~$\phi_\textrm{plu}(x) \coloneqq \max(0, 2 - x)  $
 and~$\phi_\textrm{Bor}(x) \coloneqq  m - x$ (both defined for~$x \in
 [\candidatesnr]$).

 \paragraph{Approval committee scoring functions.}
 Similarly, for an some approval election $E = (\candidates, \voters)$ with
 $\votersnr$~voters and some committee~$W \subseteq \candidates$,
 we consider
 \begin{enumerate*}[label={\small(\arabic*.)}, itemjoin={{, }}, itemjoin*={{,
	 and }}]
	 \item \emph{(approval) Chamberlin-Courant} ($\coverageScore$)
	 \item \emph{threshold-$\alpha$ Chamberlin-Courant} ($\egalCCScoreA^\gamma$), for rational~$\gamma \in (0,1]$
	 \item \emph{approval score} ($\approvalScore$)
 \end{enumerate*}, defined formally as follows:
 \begin{enumerate}[label={\small(\arabic*)}, left=9pt]
	 \item $\coverageScore(W) \!\coloneqq\! |\{v_i \in \voters : \approvals(i) \cap W
		 \neq \emptyset \}|$
	 \item 
     $
      \egalCCScoreA^\gamma(W) \!\coloneqq\!
     \left\{
     \begin{array}{ll}
      \!\!\!1 & \!\!\!\! \mbox{if } |\{v\in \voters \!\mid\!\approvals(v) \cap W \neq
      \emptyset\}| \geq \gamma \votersnr, \\
        \!\!\!0 & \!\!\!\! \mbox{otherwise}.
     \end{array}
    \right.  $%

	 \item $\approvalScore(W) \!\coloneqq\! \sum_{v \in \voters} \approvals(v) \cap W$
 \end{enumerate}
 \paragraph{Central problem.} We focus on the following, very general,
 computational problem. We use~$\alpha \in \{\utilOp, \egalOp\}$ as a
 placeholder for a committee series quality measure and $\beta \in \{\cc,
 \egalCCScoreO, \coverageScore, \egalCCScoreA^\gamma, \approvalScore\} \cup
 \WSFamily,$ to indicate a committee scoring function. For readability, we
 directly substitute $\beta$ with $\WSFamily$, when we mean that $\beta$ is an
 arbitrary weakly-separable committee scoring function (e.g.,\
 \pname{\egalOp}{\WSFamily} is a placeholder for 
 any problem
 \pname{\egalOp}{g} where~$g \in \WSFamily$).
 We do not explicitly
 specify whether the input consists of ordinal or approval votes. It is to be
 inferred
 from
 the committee scoring function~$\beta$ in question. 

 \probDef{\pgeneralnamefull{} (\pgeneralname)}
	 {Election $E = (\candidates, \voters)$ with candidates~\candidates{} and
		voters~\voters, a number~\timelimit{} of committees in a target series, a
		size~\commsize{} of committees in a target series, a maximum candidate
		frequency~\candquota{}, and a minimal committee quality~\vallowerbound{}.
 }
 {
  Is there a consecutive \candquota{}-frequency committee
  series~\committeesSeries{} of size~$\timelimit$ consisting of size-$\commsize$
  committees such that $\alpha(\beta(\committeesSeries)) \geq \vallowerbound$?
 }
\citet{bre-kac-nie:c:successive-committees} show that \pname{\egalOp}{\beta} is
\nphard{} for all studied $\beta$ even when simultaneously $\commsize = 3$ and
$\candquota=1$. Except for~$\beta' \in \{\approvalScore, \WSFamily\}$, they
prove the same for~\pname{\utilOp}{\beta'}. Importantly, these two results
immediately exclude efficient parameterized algorithms for small values
of~$\commsize$, $\candquota$, and their sum.

Even though the above formulation is a decision problem, all our algorithms can
be used to find a requested committee series.  In theorem statements, we give
asymptotic running times of algorithms ignoring mostly irrelevant terms
polynomial in the input. For clarity, we stress it using $\OO^\star$ instead of
the standard~$\OO$. The proofs marked by~\appsymb{}, or their parts, are
deferred to the appendix.

\section{The Case of Few Candidates}
\begin{table*}[t]%
  \setlength{\tabcolsep}{2pt}
  \begin{tabular}{rllp{-0pt}rll}
       \cmidrule[\heavyrulewidth]{1-3}
       \cmidrule[\heavyrulewidth]{5-7}
        & candidates number~$m$ & & & & time horizon~$\tau$ (for const.~$\commsize{}$) &\\
       \cmidrule{1-3}
       \cmidrule{5-7}
     	  \pname{\egalOp}{\beta}
     		& $\OO^\star(2^\candidatesnr)$ 
        & \footnotesize{\Shcref{thm:poly-mult-f1}} & 
        & \pname{\egalOp}{\beta}
        & $\OO^\star(2.851^{(k-0.5501)\timelimit})$
        & \footnotesize{\Shcref{thm:egal-fpt-tf1}}
     	\\
     	  \pname{\utilOp}{\beta}
     	  & $\OO^\star(\candidatesnr!(\commsize-1)^\candidatesnr)$
        & \footnotesize{\Shcref{thm:fpt-m}} & 
        & \pname{\utilOp}{\beta}
    	  & $\OO^\star(2.851^{(k-0.5501)\timelimit})$; $\candquota = 1$
    	  & \footnotesize{\Shcref{thm:util-fpt-tf1}}
     	\\
       	& $\OO^\star(2^\candidatesnr)$; const.~$\commsize$ and $\candquota = 1$
        & \footnotesize{\Shcref{thm:poly-mult-f1}} &
    		& 
        & $\OO^\star(2^{k\timelimit(f+1)}(2e)^{k\timelimit}(k\timelimit)^{\log
        (k\timelimit)})$; const.~$\candquota$
    		& \footnotesize{\Shcref{thm:util-fpt-tf2}}
     	\\
     	& $\OO^\star(4^{\candquota\candidatesnr})$; any~$\candquota$
     	& \footnotesize{\Shcref{thm:util-fpt-mf}}
     	\\
     \cmidrule[\heavyrulewidth]{1-3}
     \cmidrule[\heavyrulewidth]{5-7}
     \end{tabular}%
	\caption{Summary of our results for the egalitarian and utilitarian committee
		series quality functions for all $\beta \in \{\cc, \egalCCScoreO,
		\coverageScore, \egalCCScoreA^\gamma, \approvalScore\}$. In all cases we
		assume $k \geq 3$. Note that~\citet{bre-kac-nie:c:successive-committees}
		show polynomial-time algorithms for \pname{\utilOp}{\approvalScore} and
	  \pname{\egalOp}{\WSFamily}, so in these cases our algorithms run (much)
    slower.\label{tab:results}}
\end{table*}

Given the general computational hardness results
of~\citet{bre-kac-nie:c:successive-committees}, we start our search of efficient
algorithms from cases with a bounded number of candidates. As argued in
the introduction, this assumption can naturally be justified from the practical
point of view. Parameter ``number of candidates,'' which we denote
by~$\candidatesnr$, is too a standard parameter in the literature on the
complexity of
election problems.

For better accuracy, in some subsequent results (and in~\Cref{tab:results}) we
give running times using the size~$\commsize$ of committees. Because~$\commsize
\leq \candidatesnr$, such results always yield fixed-parameter tractability for
parameter~$\candidatesnr$. Naturally, they also show fixed-parameter
tractability for parameter~$\candidatesnr + \commsize$. Recall that the hardness
results of~\citet{bre-kac-nie:c:successive-committees} exclude fixed-parameter
tractability for~$\commsize$ alone.

Finding the winner of every multiwinner voting rule that assigns a score to each
committee and chooses the one with the maximum score as the winner is
fixed-parameter tractable with respect to the number of candidates (assuming
computing the score of a committee is polynomial-time solvable). This
observation becomes easy, when one realizes that for such rules it is enough to
enumerate all possible committees and compute their scores
(for example, see the works of~\citet{pro-ros-zoh:c:proporional-repr-complexity,bet-sli-uhl:j:mon-cc-multivariate}) . However, a committee
series is composed of multiple committees whose interdependency is nontrivially
governed by the frequency of candidates and the requirement of consecutiveness.
Indeed, even if we enumerate all at most~$2^\candidatesnr$ committees, to
construct a committee series we need to consider that each of them possibly come
in any between one and frequency-many copies as demonstrated in Example~3 by
\citet{bre-kac-nie:c:successive-committees}.

Does this complication to the domain of all possible solutions make
finding the right solution computationally harder with respect to parameter
``number of candidates''?

For the good news, we answer the above in negative. Quite surprisingly,  even
though the number of all possible committee series depends nonlinearly on the
candidate frequency, the following series of results show fixed-parameter
tractability of the problems we study with respect to solely the number of
candidates. By this, the results reveal that the increase of the complexity of the
domain of solutions is, intuitively, still bounded by a function of the number
of candidates.

We start with a foundational result for the case of~$\candquota = 1$. While the
result applies to both the utilitarian and egalitarian variants of our problem,
it is the latter variant for which the result gives fixed-parameter
tractability.

\begin{restatable}[\appsymb]{theorem}{polymultfone}
  There  exists an algorithm that solves \pname{{\utilOp}}{\beta} and
  \pname{{\egalOp}}{\beta} in $\OO^\star(2^m)$ time for $f=1$ and all
  studied~$\beta$.
\label{thm:poly-mult-f1}\end{restatable}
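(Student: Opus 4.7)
The plan is to leverage the $f=1$ constraint to reduce a committee series to a much simpler combinatorial object, and then attack it by a subset dynamic program. Since every candidate may belong to at most one committee, the $\timelimit$ committees of any feasible series are pairwise disjoint subsets of $\candidates$. Moreover, both $\util = \sum$ and $\egal = \min$ are symmetric in the multiset of committees, so the order inside the series is irrelevant to the objective. The problem therefore collapses to: choose $\timelimit$ pairwise disjoint size-$\commsize$ subsets of $\candidates$ that optimise either the sum (utilitarian) or the minimum (egalitarian) of their $\beta$-scores.

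Fix any order on $\candidates$. For each $S \subseteq \candidates$ with $|S|$ a multiple of $\commsize$, let $g[S]$ be the best aggregate score (with $\oplus \in \{+,\min\}$ according to the variant) over all partitions of $S$ into size-$\commsize$ pieces, with $g[\emptyset] = 0$ in the utilitarian case and $g[\emptyset] = +\infty$ in the egalitarian case. Writing $c_S$ for the smallest-indexed element of a nonempty $S$, which must lie in exactly one block of any valid partition, we obtain
\[
g[S] \;=\; \max_{W\,:\,c_S \in W \subseteq S,\ |W| = \commsize}\ \bigl(g[S \setminus W]\ \oplus\ \beta(W)\bigr),
\]
and the requested answer is $\max_{|S|=\timelimit \commsize}\, g[S]$. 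Correctness is a direct induction on $|S|$; extracting an actual series amounts to backtracking through the maximising choices of $W$.

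The main obstacle I expect is to package the running time as $\OO^\star(2^\candidatesnr)$ uniformly across both aggregation rules and all five studied families of $\beta$. There are $2^\candidatesnr$ states, each transition enumerates $\binom{|S|-1}{\commsize-1}$ committees (because the pivot $c_S$ is fixed and only $\commsize-1$ further elements of $W$ are chosen from $S\setminus\{c_S\}$), and a single evaluation of $\beta(W)$ takes $\mathrm{poly}(\votersnr,\candidatesnr)$ time for every studied $\beta$. A standard double-counting then absorbs the per-state branching into $\OO^\star(2^\candidatesnr)$ in the regime in which that branching is polynomial in the input; this matches the ``const.~$\commsize$'' restriction on the utilitarian row of \Cref{tab:results}. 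To obtain $\OO^\star(2^\candidatesnr)$ for the egalitarian variant without any restriction on $\commsize$, I would binary-search the target threshold $T$ and reduce to the decision question ``do there exist $\timelimit$ pairwise disjoint size-$\commsize$ committees each of $\beta$-score at least $T$?''; this question can be solved by iterated subset-sum convolution of the indicator ``size-$\commsize$ committee with score $\ge T$'', each convolution running in $\OO^\star(2^\candidatesnr)$ via fast zeta/Möbius transforms, and only $\timelimit \le \candidatesnr/\commsize$ convolutions being required.
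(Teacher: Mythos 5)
Your opening reduction --- for $\candquota=1$ the committees of a feasible series are pairwise disjoint and both aggregators are order-invariant, so the task collapses to packing $\timelimit$ disjoint size-$\commsize$ committees optimizing the sum or the minimum of their $\beta$-scores --- is exactly the paper's starting point. From there the implementations diverge. The paper encodes partial packings as monomials $y^{\chi(\cdot)}$ over characteristic vectors, multiplies polynomials of degree at most $2^{\candidatesnr}$ in $\OO^\star(2^{\candidatesnr})$ time, enforces disjointness via a Hamming-weight projection, and tracks the accumulated utilitarian score through a superscript $\ell\in[\vallowerbound]$; since $\vallowerbound$ is polynomially bounded for every studied $\beta$, polynomially many such multiplications suffice. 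Your egalitarian argument (threshold the committees at the required quality, then take $\timelimit$ iterated disjoint-union convolutions of the indicator) is the same computation in zeta/M\"{o}bius clothing; the binary search is unnecessary because $\vallowerbound$ is given in the input.

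The one place your proposal falls short of the stated theorem is the utilitarian case for non-constant $\commsize$. Your pivoted subset DP enumerates, per state $S$, all $\binom{|S|-1}{\commsize-1}$ blocks containing the pivot; summed over states this can be as large as $3^{\candidatesnr}$ up to polynomial factors (e.g., for $\commsize\approx\candidatesnr/3$), and the double-counting you invoke does not rescue it --- the total only collapses to $\OO^\star(2^{\candidatesnr})$ when the per-state branching is polynomial, i.e., for constant $\commsize$. You correctly note that \Cref{tab:results} carries exactly this caveat for the utilitarian row, but the theorem statement itself does not, and the paper's polynomial-multiplication proof genuinely achieves $\OO^\star(2^{\candidatesnr})$ for arbitrary $\commsize$. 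The fix is already implicit in your own egalitarian paragraph: run the same $\OO^\star(2^{\candidatesnr})$ disjoint-union (subset) convolution for the utilitarian variant as well, carrying an extra index $\ell\in[\vallowerbound]$ for the score accumulated so far; because $\vallowerbound$ is polynomial in $\candidatesnr$ and $\votersnr$ for all studied $\beta$, this costs only a polynomial factor. With that adjustment your argument matches the theorem in full.
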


The importance of~\Cref{thm:poly-mult-f1} for the \pname{\egal}{\beta} problem
lies in combining it with the subsequent observation
by~\citet{bre-kac-nie:c:successive-committees}.

\begin{restatable}[Lemma 1, \citealt{bre-kac-nie:c:successive-committees}]{proposition}{prop:egal2-egal1}
  \pname{\egal}{\beta} with $f\geq 2$  can be reduced to \pname{\egal}{\beta}
  with $f=1$ in linear time.  
\label{prop:egal2-egal1}\end{restatable}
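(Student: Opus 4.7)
The plan is to reduce any given instance $I=(E, \timelimit, \commsize, \candquota, \vallowerbound)$ of \pname{\egalOp}{\beta} with $\candquota \geq 2$ to the instance $I' = (E, \lceil \timelimit/\candquota \rceil, \commsize, 1, \vallowerbound)$: that is, to keep the election, the committee size, and the quality threshold unchanged while shortening the time horizon to $\lceil \timelimit/\candquota \rceil$ and setting the frequency to one. Since the transformation only modifies a few numerical parameters, it is clearly computable in linear time, and all the substantive work lies in proving that the two instances are equivalent.

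For the forward direction, I would start from a series $\series=(W_1,\ldots,W_\timelimit)$ witnessing that $I$ is a yes-instance and extract the subsequence of committees at positions $1, \candquota+1, 2\candquota+1, \ldots, (\lceil \timelimit/\candquota\rceil-1)\candquota+1$. A routine calculation confirms that this subsequence has exactly $\lceil \timelimit/\candquota \rceil$ elements, each with individual score at least $\vallowerbound$. The key observation is that any two selected indices differ by at least $\candquota$, so if one candidate belonged to two of the selected committees, its contiguous block in $\series$ would have to span at least $\candquota+1$ positions, contradicting $\candquota$-consecutiveness of $\series$. Thus the selected committees are pairwise disjoint and constitute a valid solution for $I'$.

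For the backward direction, I would take a solution $\series'=(W'_1,\ldots,W'_{\lceil\timelimit/\candquota\rceil})$ for $I'$ and build $\series$ by listing each $W'_i$ consecutively $\candquota$ times and then truncating the resulting sequence to its first $\timelimit$ entries; the inequality $\candquota\cdot\lceil\timelimit/\candquota\rceil\geq\timelimit$ makes truncation well-defined. Because the committees of $\series'$ are pairwise disjoint (which the $\candquota=1$ requirement enforces), every candidate appears only inside the block of at most $\candquota$ consecutive copies corresponding to its unique $W'_i$, so the $\candquota$-consecutiveness constraint of $I$ is satisfied. Every committee of $\series$ coincides with some $W'_i$ and therefore has score at least $\vallowerbound$, giving $\egalOp(\series) \geq \vallowerbound$.

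The main obstacle, really a matter of getting the counting right, is matching the new time horizon $\lceil \timelimit/\candquota \rceil$ exactly on both sides: the forward direction has to yield precisely that many pairwise disjoint committees out of an arbitrary $\candquota$-consecutive series, while the backward direction must cover at least $\timelimit$ positions with that many $\candquota$-wide blocks. Once the counting aligns, no further scoring-function-specific reasoning is required, so the same reduction applies uniformly to every $\beta$ we consider.
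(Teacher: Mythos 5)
Your reduction to $I'=(E,\lceil\timelimit/\candquota\rceil,\commsize,1,\vallowerbound)$ is correct, and both directions (extracting every $\candquota$-th committee, and repeating each disjoint committee $\candquota$ times with truncation) check out; this is precisely the argument behind the cited Lemma~1 of \citet{bre-kac-nie:c:successive-committees}. The paper itself states this proposition without proof, deferring entirely to that reference, so there is nothing internal to compare against beyond noting that your reconstruction matches the standard reduction.
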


The reduction in~\Cref{prop:egal2-egal1} does not increase the number of
candidates. Hence, with~\Cref{thm:poly-mult-f1}, \Cref{prop:egal2-egal1} yields
a general result about the egalitarian version of the problem, which shows the
sought tractability for small number of candidates.

\begin{restatable}{theorem}{thm:egal-poly-mult}
  There exists an algorithm that solves \pname{\egal}{\beta}  in
  $\OO^\star(2^m)$ time for all studied~$\beta$.
\end{restatable}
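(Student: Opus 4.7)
The plan is to derive this theorem as a direct corollary of \Cref{thm:poly-mult-f1} and \Cref{prop:egal2-egal1}. Given an arbitrary input instance of \pname{\egal}{\beta} with frequency parameter~$f$, I would branch on the value of~$f$. When $f = 1$, the algorithm guaranteed by \Cref{thm:poly-mult-f1} applies directly and already yields the claimed $\OO^\star(2^m)$ running time. When $f \geq 2$, I would first invoke the linear-time reduction from \Cref{prop:egal2-egal1} to obtain an equivalent instance whose frequency bound equals~$1$, and then run the algorithm of \Cref{thm:poly-mult-f1} on the resulting instance.

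The only step that actually requires an argument is checking that the $\OO^\star(2^m)$ bound survives this composition. As emphasized in the paragraph preceding the statement, the reduction of \Cref{prop:egal2-egal1} does not increase the number of candidates. Consequently, invoking \Cref{thm:poly-mult-f1} on the reduced instance costs at most $\OO^\star(2^m)$ time, where $m$ refers to the candidate count of the original input; the linear-time preprocessing is absorbed into the $\OO^\star$ notation. It is worth verifying once, in the full write-up, that the reduction preserves equivalence also for the egalitarian quality measure~$\egal$ (and not only for some other ingredient), but this is exactly what \Cref{prop:egal2-egal1} provides.

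In effect, the proof has no substantive obstacle of its own: both ingredients have already been supplied, and the entire argument consists of observing that the reduction preserves the parameter $m$ governing the complexity bound. All algorithmic content sits in \Cref{thm:poly-mult-f1}, while \Cref{prop:egal2-egal1} serves solely to lift that content from the $f = 1$ regime to arbitrary frequency bounds, giving the uniform $\OO^\star(2^m)$ guarantee for every $f$ and every studied~$\beta$.
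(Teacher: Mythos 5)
Your proposal is correct and matches the paper's own argument exactly: the paper derives this theorem by combining \Cref{thm:poly-mult-f1} with the linear-time reduction of \Cref{prop:egal2-egal1}, observing that the reduction preserves the number of candidates so the $\OO^\star(2^m)$ bound carries over. Nothing is missing.
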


Generalizing~\Cref{thm:poly-mult-f1} in the utilitarian case does not directly
lead to such an optimistic result as that for the egalitarian case.
Instead, we obtain dynamic programming (DP) algorithms whose running time
upper-bounds increase exponentially with the increase of
parameter~$\candquota$.
To increase readability, we first state our result for $\candquota = 2$.  

\begin{theorem}\label{thm:pol-util-ftwo}
  There is an $\fpt(\candidatesnr)$-algorithm running in time
  $\OO^\star(4^{2\candidatesnr})$ solving
  \pname{\util}{\beta} for $\candquota = 2$ and all studied~$\beta$.
\end{theorem}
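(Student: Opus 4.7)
The plan is to design a dynamic programming (DP) over time steps whose state captures just enough information about recently selected committees to safely extend the series by one step. The driving observation is that, since $\candquota = 2$, whether a candidate $c \in W_t$ may appear in $W_{t+1}$ depends solely on whether $c$ already belonged to $W_{t-1}$: if so, $c$ has already served two consecutive terms and must leave; otherwise $c$ may continue for one more committee. Hence, the ordered pair $(W_{t-1}, W_t)$ of the two most recent committees is a sufficient state to decide the validity of any future extension.

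Concretely, for each $t \in [\timelimit]$ and each pair of size-$\commsize$ subsets $A, B \subseteq \candidates$ (with the convention that $A = \emptyset$ marks the start, $t = 1$) I would store in a table entry $\mathrm{DP}[t, A, B]$ the maximum of $\sum_{i=1}^{t} \score(W_i)$ over all valid consecutive $2$-frequency committee series $(W_1, \ldots, W_t)$ whose last two committees are $A$ and $B$. The base case is $\mathrm{DP}[1, \emptyset, B] = \score(B)$ for every size-$\commsize$ subset $B$, and the recurrence reads
\[
\mathrm{DP}[t{+}1, B, D] \;=\; \score(D) + \max_{A \,:\, D \cap (A \cap B) = \emptyset} \mathrm{DP}[t, A, B],
\]
where the side condition expresses exactly that no candidate appears in all three consecutive committees $W_{t-1}, W_t, W_{t+1}$. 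The algorithm accepts iff $\max_{A,B} \mathrm{DP}[\timelimit, A, B] \geq \vallowerbound$.

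For the running time: there are $\timelimit \cdot (2^\candidatesnr)^2 = \OO^\star(4^\candidatesnr)$ states, each computed by iterating over at most $2^\candidatesnr$ choices of the predecessor $A$ together with one polynomial-time evaluation of $\score$. This gives overall time $\OO^\star(8^\candidatesnr) \subseteq \OO^\star(4^{2\candidatesnr})$, matching the claimed bound with slack that comfortably absorbs polynomial factors in the input size. Every $\beta$ under consideration admits a polynomial-time computation of $\score(W)$ straight from the input, so this step is routine.

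The main obstacle I expect is the correctness argument, namely showing that the purely local validity check $D \cap (A \cap B) = \emptyset$ at each transition is enough to enforce the global $\candquota = 2$ constraint on the entire reconstructed series. This should follow by a straightforward induction on $t$: in any series produced by the recurrence, whenever a candidate $c$ lies in two consecutive committees $W_{t-1}$ and $W_t$, the transition explicitly excludes $c$ from $W_{t+1}$, so no maximal run of a candidate can exceed length $2$. Beyond this, only routine bookkeeping around the boundary cases (in particular the very first committee and the treatment of the artificial $W_0 = \emptyset$) remains.
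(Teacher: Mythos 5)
Your DP rests on a reading of the frequency constraint that is strictly weaker than the one the paper uses, and this breaks correctness. A consecutive $\candquota$-frequency series requires that each candidate's appearances form a \emph{single contiguous block} of at most $\candquota$ committees (``a member of a single contiguous block of at most a given number of committees''); it is not merely the local condition that no candidate sits in $\candquota+1$ consecutive committees. Your transition check $D \cap (A\cap B) = \emptyset$ enforces only the latter. For instance, a series in which a candidate $c$ belongs to $W_1$ and to $W_4$ but to neither $W_2$ nor $W_3$ passes every one of your transition tests, yet it is infeasible because $c$'s appearances are not contiguous; hence your algorithm can accept \no-instances. A telltale sign that the state $(W_{t-1},W_t)$ cannot suffice is the paper's treatment of $\candquota=1$ (\Cref{thm:poly-mult-f1}, \Cref{thm:util-fpt-tf1}), where the problem is explicitly equivalent to choosing $\timelimit$ \emph{pairwise} disjoint committees; your DP specialized to $\candquota=1$ would only force \emph{adjacent} committees to be disjoint.

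The paper's proof repairs exactly this deficiency: its table $F(A,j,s,G)$ carries, besides the analogue of your last-committee information (the set $G$ of candidates appearing for the first time in the $j$-th committee), the set $A$ of candidates still \emph{available}, and the recursion removes $G$ from the pool used for the first $j-1$ committees, so a candidate whose block has started in committee $j$ can never also occur earlier in the series. Your approach could be salvaged by augmenting the state with the set $U$ of candidates already retired from the series and excluding them from $D$; this adds a factor of $2^{\candidatesnr}$ to the state space and still fits inside the claimed $\OO^\star(4^{2\candidatesnr})$ bound. As written, however, the induction you sketch establishes only the ``no run of length $\candquota+1$'' property, not feasibility, so the proposal has a genuine gap.
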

\begin{proof}
	Let us define a Boolean function~$F(A, j, s, G)$, which returns
  \true{} if among candidates~$A$, there is a series
    achieving quality at least~$s$ and consisting of~$j$~committees,
    each of size $2$ such that
    in the~$j$-th committee 
	candidates from~$G \subseteq A$ appear for the first time
  in the series; $F$ returns \false{}
	otherwise. Our algorithm computes the values of~$F$ and returns
  \true{}, if at
	least one of~$F(C, \tau, \vallowerbound, G')$, for~$G' \subseteq C$, $|G'| \leq k$ is
  \true{}.

  We compute the values of~$F$ applying the DP approach to the following
  recursive formula. Denoting by~$\mathcal{W}(A, G, k)$ a collection of all
  size-$\commsize$ sets~$W$ such that $G \subseteq W \subseteq A$ we get:
	\begin{align*}
	   f(A, &j, s, G) = \\
      &\bigvee_{W \in \mathcal{W}(A, G, k)} f(A \setminus G, j-1,
	s-\score(W), W \setminus G). 
	\end{align*}
	The correctness of the formula follows from the
    equation. Recall that to compute the left hand side value, we need to
    check whether it is possible to obtain exactly~$j$
	$2$-consecutive committees yielding quality~$s$ such that candidates from~$G$
	appear in the $j$-th committee for the first time. To do so, in the formula,
    we consider each
	possible committee~$W$ containing~$G$ as a subset and scoring~$\score(W)$. For
	each such $W$, we check whether there is a series of $j-1$ 
    consecutive~committees of the requested size that jointly get
    quality~$s-\score(W)$ consisting of candidates in~$A \setminus G$. The last
    conditions follows from the fact that candidates in~$G$ must be used for
    the first time only in $W$ in question.
	Furthermore, we assure that the $j-1$-th committee contains the candidates
	that appear in~$W$ for the second time, as these have to take part in the
	committees consecutively; hence the last argument~$W \setminus G$ of~$F$'s
	evaluation on the right-hand side. To conclude the proof,
  we define $F(A, 0, s, G) = \textrm{\true}$
  for $s \leq 0$ and~\false{} for~$s > 0$. We let the value of the
  function be~\false{} each time $\mathcal{W}(A, G, k) = \emptyset$.

    There are at most $\timelimit2^\candidatesnr\textrm{poly}(\votersnr,
    \candidatesnr)$ values of~$F$ to compute, where the polynomial term comes
    form the maximum score of our committee scoring functions. It takes at
    most~$2^\candidatesnr$ to compute a single value. Hence, as a result we
    obtain the claimed running time of~$\OO^\star(4^{2\candidatesnr})$.
\end{proof}
The above approach readily generalizes to arbitrary values of~$\candquota$ by
defining $F$ to take arguments~$G_1$ up to~$G_{\candquota-1}$ instead of
just~$G$.
Increasing the parameter space, leads to the exponential in~$\candquota$
increase of the running time.
\begin{restatable}[\appsymb]{theorem}{utilfptmf}\label{thm:util-fpt-mf}
  There is an $\fpt(\candidatesnr)$-algorithm running in
  time~$\OO^\star(4^{\candquota\candidatesnr})$ that solves
	 \pname{\util}{\beta} for all studied~$\beta$.
\end{restatable}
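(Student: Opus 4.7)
The plan is to generalize the dynamic program of~\Cref{thm:pol-util-ftwo} from a single auxiliary set~$G$ to a tuple of $\candquota-1$ sets~$G_1, \ldots, G_{\candquota-1}$, one per possible ``block age'' strictly below the maximum allowed frequency~$\candquota$. I would define a Boolean table $F(A, j, s, G_1, \ldots, G_{\candquota-1})$ that evaluates to~\true{} iff there is a consecutive $\candquota$-frequency series of~$j$ size-\commsize{} committees, drawn from candidates in~$A \subseteq \candidates$, achieving total score at least~$s$ and such that each~$G_i$ equals the set of candidates whose block in the series has length exactly~$i$ and ends in the $j$-th (last) committee. Candidates whose block already reached the maximum length~$\candquota$ in the last committee need not be stored explicitly because they are determined by the chosen last committee and the tuple.

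The recursion peels off the last committee. Starting from a state $(A, j, s, G_1, \ldots, G_{\candquota-1})$, I iterate over every size-\commsize{} committee~$W$ with $G_1 \cup \cdots \cup G_{\candquota-1} \subseteq W \subseteq A$, set $G_\candquota := W \setminus (G_1 \cup \cdots \cup G_{\candquota-1})$, and recurse to
\[
  F\bigl(A \setminus G_1,\; j-1,\; s - \score(W),\; G_2,\; G_3,\; \ldots,\; G_{\candquota-1},\; G_\candquota\bigr).
\]
The intuition is that, after deletion of~$W_j$, any candidate whose block had length~$i$ ending at~$W_j$ now has a block of length $i-1$ ending at $W_{j-1}$, so the~$G_i$ indices shift by one. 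Candidates in~$G_1$ only appear in~$W_j$ and hence must be forbidden in earlier committees, which is why the pool of available candidates in the subcall becomes~$A \setminus G_1$. The base case is $F(A, 0, s, \emptyset, \ldots, \emptyset) = \true$ whenever $s \leq 0$; any state with $j=0$ and some nonempty~$G_i$ returns~\false, and the same holds whenever no admissible last committee~$W$ exists. The final answer is the disjunction of $F(\candidates, \timelimit, \vallowerbound, G_1, \ldots, G_{\candquota-1})$ over all admissible configurations of the last committee.

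The step I expect to be the main obstacle is the correctness argument, to be carried out by induction on~$j$: for every state that is set to~\true, I must exhibit a valid series realising the demanded invariants, and conversely every valid series must be counted by a unique combination of the chosen~$W_j$ and the inductive subproblem. The delicate boundary cases are the removal of~$G_1$ (candidates appearing only in~$W_j$ must be unavailable earlier, matching the consecutiveness constraint) and the handling of~$G_\candquota$, which becomes the new $G_{\candquota-1}$ in the subproblem and will, through repeated index shifts, be removed from~$A$ exactly $\candquota-1$ peelings later, precisely when the candidate's maximal block ends.

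Once correctness is settled, the running-time analysis is routine. Every candidate is in exactly one of~$\candquota+1$ roles (outside~$A$, in~$A$ but in none of the~$G_i$'s, or in exactly one~$G_i$ for $i \in [\candquota-1]$), so there are $(\candquota+1)^\candidatesnr$ distinct state patterns. Combined with~$\timelimit$ values of~$j$ and polynomially many score targets, the number of states is $\OO^\star((\candquota+1)^\candidatesnr)$. Each state is evaluated by enumerating at most~$2^\candidatesnr$ candidate committees~$W$, giving total time $\OO^\star\bigl(2^\candidatesnr (\candquota+1)^\candidatesnr\bigr) \leq \OO^\star(4^{\candquota\candidatesnr})$, where the final inequality uses $2(\candquota+1) \leq 4^\candquota$ for $\candquota \geq 1$.
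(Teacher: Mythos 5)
Your overall plan --- extending the dynamic program of \Cref{thm:pol-util-ftwo} to a tuple $G_1,\ldots,G_{\candquota-1}$ recording how long the blocks ending at the last committee have been running, shifting the tuple by one index per peeled committee, and removing $G_1$ from the candidate pool --- is exactly the route the paper takes, and your state count ($(\candquota+1)^{\candidatesnr}$ role assignments per candidate) is in fact a sharper accounting than the paper's. However, the step you yourself flag as the main obstacle does contain a genuine gap, located precisely where you expect it. Your invariant states that $G_i$ \emph{equals} the set of candidates whose block has length exactly $i$ and ends at the last committee. The recursion does not preserve this: after peeling $W_j$, the sub-series of the first $j-1$ committees also has blocks ending at committee $j-1$ that do \emph{not} continue into $W_j$ (the candidates of $W_{j-1}\setminus W_j$), and none of these appear in the shifted tuple $(G_2,\ldots,G_{\candquota-1},\, W_j\setminus\bigcup_i G_i)$. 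Under the ``equals'' semantics the sub-state is therefore \false{} for the very witness you need, and the induction breaks at the first step.

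Weakening the invariant to ``every candidate of $G_i$ has a block of length exactly $i$ ending at the last committee'' repairs soundness but exposes the deeper issue: unrolling the recursion, every candidate not designated in the top-level tuple is forced to appear in exactly $\candquota$ consecutive committees (and with your strict base case even blocks that would naturally be truncated at committee $1$ are forbidden, so for $\commsize=1$ and $\candquota=\timelimit=2$ the series $(\{a\},\{b\})$ is never generated). Hence the DP enumerates only a restricted class of series, and completeness additionally requires an exchange argument showing that some utilitarian-optimal series lies in that class --- replacing a committee containing under-used candidates by a copy of a higher-scoring neighbour without violating the frequency bound, in the spirit of the paper's \Cref{lem:division-space}. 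To be fair, the paper's own write-up of \Cref{thm:pol-util-ftwo} and \Cref{thm:util-fpt-mf} is equally silent on this point, so your proposal is a faithful rendering of the intended argument; but as a self-contained proof it is missing both the corrected invariant and the exchange lemma that would justify restricting the search space.
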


In the remainder of this section, we present~\Cref{thm:fpt-m} covering the
claimed fixed-parameter tractability for $\candidatesnr$ for
\pname{\util}{\beta}. There is a clear theoretical advantage of this next result
over the one from~\Cref{thm:util-fpt-mf}, as the latter depends exponentially on
both~$\candidatesnr$ and~$\candquota$. As we shall see, however, the running
time of the algorithm from~\Cref{thm:fpt-m} may be (asymptotically) as large as
$\candidatesnr^\candidatesnr$. Such a running time would (asymptotically) be way
larger than $4^{\candquota\candidatesnr}$ coming from~\Cref{thm:util-fpt-mf} for
small values of~$\candquota$. Hence, the algorithm from~\Cref{thm:util-fpt-mf}
(as well as the one from~\Cref{thm:poly-mult-f1}) might be more useful in
practice. Especially since one of the motivations of studying candidate
frequencies is to avoid excessive overload of committee members, which leads to
small values of~$\candquota$ \citep{bre-kac-nie:c:successive-committees}.
Finally, the algorithm from the following~\Cref{thm:fpt-m}, albeit interesting
from the computational complexity classification perspective, would rather turn
out too computationally intensive to be applicable in practice.

\begin{theorem}\label{thm:fpt-m}
	There is an $\fpt(\candidatesnr)$-algorithm which runs in time
	$\OO^\star(\candidatesnr!(\commsize+1)^\candidatesnr)$ and solves
	\pname{\alpha}{\beta} for $\alpha \in \{\utilOp, \egalOp\}$ and all
  studied~$\beta$.
\end{theorem}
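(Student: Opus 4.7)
The plan is to design an enumeration-based algorithm that exploits a compact encoding of candidate committee series specific to the regime of few candidates. The key combinatorial claim I will establish is that any valid consecutive \candquota-frequency committee series is determined by two pieces of data: a permutation $\pi \in S_\candidatesnr$ of the candidates, capturing the order in which they first appear in the series (with a canonical tie-breaking rule for candidates whose first committee coincides and unused candidates appended at the end), and a per-candidate value $x_c \in \{0, 1, \ldots, \commsize\}$ that encodes the local structure of $c$'s first appearance---for instance, the number of candidates inherited from the preceding committee or, equivalently, how many new candidates enter alongside $c$. Because each transition between consecutive committees can replace at most \commsize{} members and committees have exactly \commsize{} members, the range of $x_c$ is bounded by $\commsize + 1$ independently of \candquota{} and \timelimit{}.

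With this characterization in hand, the algorithm iterates over all $\candidatesnr!$ permutations $\pi$ and all $(\commsize+1)^\candidatesnr$ parameter vectors $(x_c)_{c \in \candidates}$. For each pair $(\pi, x)$, it greedily reconstructs a tentative committee series in polynomial time by processing the candidates in the order given by $\pi$, using $x_c$ to decide, together with the already placed predecessors, the starting time and the tenure length of $c$ so that every committee has size \commsize{} and the \candquota-frequency constraint is respected. If the reconstruction fails---for instance because the schedule cannot fill a committee, some block would exceed \candquota{} committees, or the total length does not match \timelimit{}---the encoding is discarded. For each feasible reconstruction, the $\alpha$-$\beta$ quality of the series is evaluated in polynomial time (each considered $\beta$ admits polynomial-time committee scoring, and the $\utilOp$ and $\egalOp$ aggregations are trivial over \timelimit{} committees), and the algorithm outputs \yes{} iff some reconstructed series attains quality at least \vallowerbound.

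The main obstacle is precisely pinning down the interpretation of $x_c$ so that it simultaneously takes at most $\commsize + 1$ values, makes the reconstruction deterministic, and captures every valid series. This requires a careful argument that the local overlap information, propagated along $\pi$, suffices to recover the entire series, with a case analysis handling the very first committee, ties in first-appearance times, and termination before all candidates in $\pi$ are scheduled. Once the encoding is shown to cover every valid series (so that no optimum is missed), the runtime follows directly: the enumeration accounts for $\candidatesnr! \cdot (\commsize+1)^\candidatesnr$ pairs, each processed with polynomial overhead, yielding the claimed $\OO^\star(\candidatesnr!(\commsize+1)^\candidatesnr)$ bound for both $\alpha \in \{\utilOp, \egalOp\}$ and all studied committee scoring functions $\beta$.
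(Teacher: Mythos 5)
Your high-level enumeration matches the paper's in spirit --- the paper also guesses an order of the candidates (one of $\candidatesnr!$) together with an interval/overlap structure contributing roughly $(\commsize+1)^\candidatesnr$ choices --- but there is a genuine gap at the step you yourself flag as ``the main obstacle,'' and it is not a technicality. The pair $(\pi, x)$ with $x_c \in \{0,\dots,\commsize\}$ cannot determine the committee series, because it carries no information about \emph{multiplicities}: how many consecutive committees each candidate serves in, equivalently how many times each distinct committee is repeated. These tenure lengths range over $[\candquota]$, they are what make the total length come out to exactly $\timelimit$, and they directly affect the utilitarian quality. Concretely, with $\commsize=3$, $\candquota=3$, $\timelimit=5$ and two disjoint committees $S_1, S_2$, the series $(S_1,S_1,S_1,S_2,S_2)$ and $(S_1,S_1,S_2,S_2,S_2)$ have identical first-appearance orders and identical local overlap data at every candidate's first appearance, yet generally different $\utilOp$ scores. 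So either your reconstruction is deterministic and misses one of them (and hence possibly the optimum), or it must search over multiplicities, and that search space is not bounded by any function of $\candidatesnr$ and $\commsize$ alone --- it depends on $\candquota$ and $\timelimit$, which can be arbitrarily large relative to $\candidatesnr$. A naive greedy choice of tenures (e.g., repeat higher-scoring committees as often as possible) is also not obviously correct once committees overlap and ``intermediate'' committees built from shared candidates plus fresh fillers enter the picture.

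This is exactly where the paper's proof does real work: a guess (a ``division'') fixes only the order and the interval/overlap skeleton, a structural lemma shows that some division's solution space contains an optimal series, and then a polynomial-time dynamic program \emph{optimizes over the remaining degrees of freedom} --- the repetition counts $q$ of each primitive, the counts $u$ of intermediate committees, and the per-candidate consecutive-use counts $r \le \candquota$ --- rather than reading them off the guess. To repair your argument you would need to replace ``greedily reconstructs a tentative committee series'' with such an optimization subroutine (and prove the covering claim that every optimal series is reachable from some guess, including the case of filler candidates that never head a committee of their own); as written, the covering claim is asserted but the encoding is information-theoretically too small to support it.
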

\begin{proof}
  Throughout the whole proof we let $\commsize$ be the size of committees in a
  requested $\candquota$-consecutive committee series of size~$\timelimit$. Our
  algorithm computes the sought committee series by repeatedly running a dynamic
  programming (DP) procedure over a collection of guesses, each of which represents a
  subspace of the solution space. The proof comes in three parts. We first
  discuss the guesses, then the DP procedure, and finally we show how to
  effectively enumerate the guesses.

  \paragraph{Guesses.}
  Our algorithm repeatedly guesses a~\emph{division}, which is a specific
  structure over candidates~\candidates, which describes a space of possible
  $\candquota$-frequency committee series of size~$\timelimit$.
  Before we describe divisions, let us first consider some order of candidates
  represented by a bijection~$\rho \colon \left[\candidatesnr\right] \rightarrow
  \candidates$; e.g.,\ $\rho(2) = c'$ means that candidate~$c' \in \candidates$ is
  ordered second. A subset~$X$ of candidates \emph{is an interval} (according
  to~$\rho$) if the set $\{i \colon \rho(i) \in X\}$ is an interval;
  intuitively, the candidates from~$X$ form an interval according to~$\rho$.
  If~$X$ is an interval, then we denote by~$\beg(X)$ the position in~$\rho$ of
  the leftmost candidate of~$X$.
  \begin{definition}
    A division~$\PSdivision$ of size~$\PSdivisionSize$ is a $3$-tuple $(\rho,
    \PSdivsets, \PSdivfil)$ with \emph{order} $\rho$ of candidates, an ordered
    set~$\PSdivsets = \{S_1, S_2, \ldots S_\PSdivisionSize\}$ of
    size-$\commsize$ committees called \emph{primitives}, a
    collection~$\PSdivfil = \{F_1, F_2, \ldots, F_\PSdivisionSize\}$ of
    \emph{interim candidate sets} such that jointly:
    \begin{enumerate}[nosep, label=\alph*), left=9pt]
     \item $S_i$ is an interval according to~$\rho$ for each $i \in [d]$,
     \item $\beg(S_i) < \beg(S_{i+1})$ for each $i \in [d-1]$,
     \item $F_\PSdivisionSize =
      \emptyset$,
     \item $F_i \cap F_j =
       \emptyset$ for each $(i, j) \in [d] \times [d]$, $i \neq j$,
     \item $|F_i| \in \{0,\commsize - |S_i
       \cap S_{i+1}|\}$ for each $i \in [\PSdivisionSize - 1]$,
     \item $S_i \cap S_{i+1} = \emptyset \Rightarrow F_i = \emptyset$ for each
       $i \in [\PSdivisionSize - 1]$, and
    \item $\left(\bigcup_{i \in [\PSdivisionSize]}F_i\right) \cap \left(\bigcup_{i \in
      [\PSdivisionSize]}S_i\right) = \emptyset$.
    \end{enumerate}
  \end{definition}
  Together with the following intuitive description of the above definition, we
  provide an example division in~\Cref{ex:division}. Let us fix some
  order~$\rho$. Then, the definition says that the primitives in~$\PSdivsets$
  are $\PSdivisionSize$ pairwise different interval size-$\commsize$ committees
  ordered increasingly by their interval beginning according to~$\rho$. Further,
  the interim candidate sets are pairwise disjoint and consist of candidates
  that do not belong to any primitive committee. Interim candidate
  set~$F_\PSdivisionSize$ is always empty, and each interim candidate set~$F_i$,
  $i \in [d-1]$ can be non-empty only if $Y = S_i \cap S_{i+1}$ is non-empty;
  but if~$F_i$ is non-empty, then it contains~$\commsize - |Y|$ elements.  

  A division is intended to encode a (sub)space of feasible solutions
  that can be used for an efficient search for
  a solution to our problem. So, it is crucial that at least one division
  describes a space containing the solution, if the latter exists.
  \begin{restatable}[\appsymb]{lemma}{divisionspace}\label{lem:division-space}
    For each instance of~$\pname{\alpha}{\beta}$, there is a
    division~$\PSdivision$ whose space contains a committee series maximizing
    $\alpha(\beta(\series))$ for the given values of~$\candquota$, $\timelimit$,
    and~$\commsize$. 
  \end{restatable}
  The proof of the above lemma is constructive, but it does not offer an
  efficient way of finding an optimal solution. Before showing such
  a way, we present an intuition of how the elements of the space of some
  division~$\PSdivision$ look like. To ease the presentation, we assume the
  identity order~$\rho(i) = c_i$ for $i \in \left[\candidatesnr\right]$, and lay
  out some elements of the space in~\Cref{ex:comm-series}.

  First, observe that some consecutive $\candquota$-frequency committee of
  size~$\timelimit$ can be constructed as follows. We assign each
  primitive~$S_i \in \PSdivsets$ the respective number~$r_i$ of copies of
  this primitive ensuring that the sum of~$r_i$'s is exactly~$\timelimit$.
  Assuming that our~$r_i$'s do not violate the frequency limit~$\candquota$, we
  get a committee series~$\series$ of size~$\timelimit$ by repeating, in order,
  each primitive~$S_i$ exactly~$r_i$ times. 
  Clearly, $\series$ is consecutive due to the interval requirements on the
  primitives in~$\PSdivsets$.
  There are, however, more committee series similar to~$\series$ in the space
  of~$\PSdivision$. They use the so-far ignored interim candidate
  sets~$\PSdivfil$.
  For an example, consider the following case. Assume that~$\series$ contains a
  sequence $S_i, S_i, S_{i+1}$ of neighboring committees and
  that~$Y=S_i \cap S_{i+1}$ is non-empty.
  If $F_i \neq \emptyset$, then we can substitute the ``middle'' $S_i$,
  with a committee $S' = Y \cup F_i$, thus
  obtaining a new committee series~$\series'$ containing a
  sequence~$S_i, S', S_{i+1}$. Note that $\series'$ remains
  $\candquota$-consecutive
  because by definition $|F_i| = \commsize - |Y|$ and~$F_i \cap Y = \emptyset$.

  The whole space of~$\PSdivision$ consists of the committee series emerging
  from all \emph{valid} choices of the multiplicities~$r_i$ and from all valid
  substitutions (perhaps multiple at once) analogous to the construction
  of~$\series'$.

	\begin{figure}\centering
		\begin{tikzpicture}[
			cand/.style={draw=none,fill=none, minimum width = 1.85em},
			highligh/.style={draw=none, rounded corners, fill opacity=0.20, inner xsep=-2px,
			inner ysep=-1px, line width = 1px}]
        \def\n{11}
    
			  \node[cand] (cand-1) at (0, 0) {$c_{1}$};
        \foreach \i in {2, ..., \n} {
					\pgfmathparse{int(\i-1)}
					\node[cand, right = 0px of cand-\pgfmathresult] (cand-\i) {$c_{\i}$};
        }

				\node[highligh, fill=red!10!white, draw=red, fit = (cand-1) (cand-3)] {}; 
				\node[highligh, fill=blue!10!white, draw = blue, fit = (cand-4) (cand-6)] {}; 
				\node[highligh, fill=green!20!white, draw = green, fit = (cand-5)
          (cand-7), inner ysep = 2px] {}; 
				\node[highligh, fill=violet!20!white, draw = violet, fit = (cand-7)
          (cand-9)] {}; 

        \node[cand, left = 5px of cand-1] (divs) {$\PSdivsets$:};

			  \node[cand, below right = 2ex and 0em of cand-1, anchor = north] (fcand-1) {};
        \foreach \i in {2, ..., \n} {
					\pgfmathparse{int(\i-1)}
					\node[cand, right = 0px of fcand-\pgfmathresult] (fcand-\i) {};
        }

        \node[cand] (f_1) at (fcand-3) {$\pmb{\emptyset}$};
        \node[cand] (f_2) at (fcand-5) {$\emptyset$};
        \node[cand, xshift = -1em] (f_3) at (fcand-7) {$\{c_{10}, c_{11}\}$};
        \node[cand] (f_4) at (fcand-9) {$\pmb{\emptyset}$};

        \node[cand, below = 3px of divs] (divf) {$\PSdivfil$:};

    \end{tikzpicture}
    \caption{Division~$\PSdivision = (\PSorder, \PSdivsets, \PSdivfil)$ of
      size~$d=4$ of
      $11$~candidates for a committee size~$\commsize=3$ and
      order~$\PSorder(i) = c_i$. The division features $\PSdivsets = (\{c_1, c_2,
      c_3\},\{c_4, c_5, c_6\}\{c_5, c_6, c_7\}, \{c_7, c_8, c_9\})$ and
      $\PSdivfil = \{\pmb{\emptyset}, \emptyset, \{c_{10}, c_{11}\},
      \pmb{\emptyset}\}$. Bold~$\pmb{\emptyset}$ are empty by definition.
    }\label{ex:division}
	\end{figure}

  \paragraph{Finding a Solution.}
  Assume a division~$\PSdivision = (\PSorder, \PSdivsets,
  \PSdivfil)$, where $\PSdivsets = (S_1, S_2, \ldots, S_{\PSdivisionSize})$
  and~$\PSdivfil = (F_1, F_2, \ldots, F_\PSdivisionSize)$.
  We show a dynamic programming algorithm (DP) running in the claimed FPT-time that
  tests whether the space of~$\PSdivision$ contains a sought committee.
  We define a Boolean function~$T(t,i,q,u,j,r,p)$, for $q>0$, $r>0$, which
  is \true{} if and only if there is an $\candquota$-consecutive series $X$
  such that:
	\begin{enumerate}[nosep, label=\alph*), left=9pt]
		\item $X$ contains $t$~committees (of size~$\commsize$),
    \item primitive~$S_i \in \PSdivsets$ is repeated exactly $q$~times
      in~$X$,
    \item if~$F_i \neq \emptyset$, committee~$S' = F_i \cup (S_i \cap S_{i+1})$,
      called \emph{$i$-intermediate}, is repeated $u$~times in~$X$ between the
      copies of~$S_i$ and~$S_{i+1}$,
		\item in which candidate~$\PSorder(j) \in S_i$ is part of exactly
			$r$~consecutive committees of~$X$, and
		\item $\alpha(\beta(X)) \geq p$.
	\end{enumerate}
  Assuming that $\PSorder(j')$ is the last candidate in~$S_\PSdivisionSize$ according
  to~$\PSorder$, there exists a sough $\candquota$-consecutive series of
  size~$\timelimit$ achieving a committee series quality~$\eta$ if there is a
  pair~$(q', r') \in [\candquota] \times [\candquota]$ for which~$T(\timelimit,
  \PSdivisionSize, q', 0, j', r', \eta)$ outputs \true{}.

  We recursively compute the values~$T(t,i,q,u,j,r,p)$ in the order of
  increasing values of~$t$, $i$, $j$, and~$p$; the variables are mentioned in
  the order of increasing frequency of change during computation iterations.
  Because the candidates are ordered with respect to~$\PSorder$, it is convenient to
  think of combinations of values~$i$ and~$j$ as cells in a grid, where $j$ is
  the column number representing candidate~$\PSorder(j)$ and $i$ is the row number
  representing primitive~$S_i \in \PSdivsets$. In the grid, primitive~$S_i$
  forms an interval in the $i$-th row; this interval corresponds to the
  candidates in~$S_i$. As depicted in~\Cref{fig:grid-representation}, by
  convention, we place point~$(1,1)$, corresponding to candidate~$\PSorder(1)$ in
  primitive~$S_1$, in the upper left corner of the corresponding grid.

  Some values of $T(t,i,q,u,j,r,\eta)$ are invalid by definition or in an
  obvious manner. This is the case for:
	\begin{enumerate}[nosep, label=(I\arabic*), left=9pt]
    \item each $i$ and $j \in [\candidatesnr]$ for which~$\PSorder(j) \notin
      C_i$; see dark shaded cells in~\Cref{fig:grid-representation},
    \item cases in which $q+t > \tau$, $q + u > \candquota$, or $r > \candquota$,
    \item positive values~$u$ for $i$ for which $F_i=\emptyset$; in particular,
       by definition of~$\PSdivision$, this holds when $S_i \cap S_{i+1} \neq
      \emptyset$\label{enum:invalid:posu}.
  \end{enumerate}
  For readability, we never explicitly test for these invalid cases assuming,
  for technical reasons, that $T$ returns \false{} for them.

  For the sake of presentation, for each primitive~$S_i \in \PSdivsets$, we
  define the \emph{head} and the \emph{tail} of~$S_i$ as, respectively, the
  first and the last candidate of~$S_i$ with respect to order~$\rho$
  (as demonstrated in~\Cref{fig:grid-representation}). Formally, $\head(S_i) =
  j'$ and $\tail(S_i) = j''$ if for each $j \in [\candidatesnr]$ such that
  $\rho(j) \in S_i$, $j'\leq j$ and $j\leq j''$. If a candidate is
  neither the tail nor the head of some primitive~$S_i \in \PSdivsets$, then it
  is \emph{intermediate} in this primitive. Note that a candidate might be the
  head of one primitive but an intermediate candidate for another primitive. 

  In what follows, we frequently consider a situation in which we seek a
  committee series of quality at least~$p$ by repeating a primitive~$S_i \in
  \PSdivsets$ exactly $q$~times and the committee $(S_i \cap S_{i+1}) \cup F_i$
  exactly $u$~times.
  In such cases, we use~$\reqqual(i,p,q,u) \coloneqq p - q\beta(S_i) -
  u\beta((S_i \cap S_{i+1})\cup F_i)$ that describes the required quality of a
  to-be-constructed committee series prior to using the $q + u$ committees
  mentioned above. This is where the subtraction in the expression comes
  from. Our formulas use~$u=0$ whenever $(S_i \cap S_{i+1})\cup F_i$ is not a
  committee of the requested size (when either~$F_i = \emptyset$ or $S_i \cap
  S_{i+1} = \emptyset$). Consequently, the related term vanishes.

	\begin{figure}\centering
		\begin{tikzpicture}[
			cand/.style={draw=none,fill=none, minimum width = 1.5em},
			highligh/.style={draw=none, rounded corners, fill opacity=0.2, inner xsep=-2px,
			inner ysep=-1px}]
    
			  \node[cand] (c1-cand-1) at (0, 0) {$c_{1}$};
			  \node[cand, below = 0px of c1-cand-1] (c1-cand-2) {$c_{2}$};
			  \node[cand, below = 0px of c1-cand-2] (c1-cand-3) {$c_{3}$};

			  \node[cand, right = 0px of c1-cand-1] (c2-cand-1) {$c_{1}$};
			  \node[cand, below = 0px of c2-cand-1] (c2-cand-2) {$c_{2}$};
			  \node[cand, below = 0px of c2-cand-2] (c2-cand-3) {$c_{3}$};

			  \node[cand, right = 0px of c2-cand-1] (c3-cand-1) {$c_{4}$};
			  \node[cand, below = 0px of c3-cand-1] (c3-cand-2) {$c_{5}$};
			  \node[cand, below = 0px of c3-cand-2] (c3-cand-3) {$c_{6}$};
			  \node[cand, right = 0px of c3-cand-1] (c4-cand-1) {$c_{7}$};
			  \node[cand, below = 0px of c4-cand-1] (c4-cand-2) {$c_{5}$};
			  \node[cand, below = 0px of c4-cand-2] (c4-cand-3) {$c_{6}$};

			  \node[cand, right = 0px of c4-cand-1] (c5-cand-1) {$c_{7}$};
			  \node[cand, below = 0px of c5-cand-1] (c5-cand-2) {$c_{10}$};
			  \node[cand, below = 0px of c5-cand-2] (c5-cand-3) {$c_{11}$};
        
			  \node[cand, right = 0px of c5-cand-1] (c6-cand-1) {$c_{7}$};
			  \node[cand, below = 0px of c6-cand-1] (c6-cand-2) {$c_{8}$};
			  \node[cand, below = 0px of c6-cand-2] (c6-cand-3) {$c_{9}$};

			  \node[cand, right = 4em of c6-cand-1] (d1-cand-1) {$c_{1}$};
			  \node[cand, below = 0px of d1-cand-1] (d1-cand-2) {$c_{2}$};
			  \node[cand, below = 0px of d1-cand-2] (d1-cand-3) {$c_{3}$};

			  \node[cand, right = 0px of d1-cand-1] (d2-cand-1) {$c_{4}$};
			  \node[cand, below = 0px of d2-cand-1] (d2-cand-2) {$c_{5}$};
			  \node[cand, below = 0px of d2-cand-2] (d2-cand-3) {$c_{6}$};

			  \node[cand, right = 0px of d2-cand-1] (d3-cand-1) {$c_{4}$};
			  \node[cand, below = 0px of d3-cand-1] (d3-cand-2) {$c_{5}$};
			  \node[cand, below = 0px of d3-cand-2] (d3-cand-3) {$c_{6}$};
			  \node[cand, right = 0px of d3-cand-1] (d4-cand-1) {$c_{7}$};
			  \node[cand, below = 0px of d4-cand-1] (d4-cand-2) {$c_{5}$};
			  \node[cand, below = 0px of d4-cand-2] (d4-cand-3) {$c_{6}$};

			  \node[cand, right = 0px of d4-cand-1] (d5-cand-1) {$c_{7}$};
			  \node[cand, below = 0px of d5-cand-1] (d5-cand-2) {$c_{5}$};
			  \node[cand, below = 0px of d5-cand-2] (d5-cand-3) {$c_{6}$};
        
			  \node[cand, right = 0px of d5-cand-1] (d6-cand-1) {$c_{7}$};
			  \node[cand, below = 0px of d6-cand-1] (d6-cand-2) {$c_{8}$};
			  \node[cand, below = 0px of d6-cand-2] (d6-cand-3) {$c_{9}$};

				\node[highligh, fill=red!20!white, draw = red, fit = (c1-cand-1) (c1-cand-3)] {}; 
				\node[highligh, fill=red!20!white, draw = red, fit = (c2-cand-1) (c2-cand-3)] {}; 
				\node[highligh, fill=blue!20!white, draw = blue, fit = (c3-cand-1) (c3-cand-3)] {}; 
				\node[highligh, fill=green!40!white, draw = green, fit = (c4-cand-1) (c4-cand-3)] {}; 
				\node[highligh, fill=green!40!white, draw = green, inner xsep = 0, inner ysep = 0, fit = (c5-cand-1)] {}; 
				\node[highligh, fill=violet!40!white, draw = violet, fit = (c5-cand-1)] {}; 
				\node[highligh, fill=violet!40!white, draw = violet, fit = (c6-cand-1) (c6-cand-3)] {}; 

				\node[highligh, fill=red!20!white, draw = red, fit = (d1-cand-1) (d1-cand-3)] {}; 
				\node[highligh, fill=blue!20!white, draw = blue, fit = (d2-cand-1) (d2-cand-3)] {}; 
				\node[highligh, fill=blue!20!white, draw = blue, fit = (d3-cand-1) (d3-cand-3)] {}; 
				\node[highligh, fill=green!40!white, draw = green, fit = (d4-cand-1) (d4-cand-3)] {}; 
				\node[highligh, fill=green!40!white, draw = green, fit = (d5-cand-1) (d5-cand-3)] {}; 
				\node[highligh, fill=violet!40!white, draw = violet, fit = (d6-cand-1) (d6-cand-3)] {}; 
    \end{tikzpicture}
    \caption{Two selected $3$-consecutive $3$-committee series of
      size~$\timelimit=6$ from the space of division~$\PSdivision$
      of~\Cref{ex:division}. Each column represents a single committee in a
      committee series, and each color represents a different primitive (or its part).}\label{ex:comm-series}
	\end{figure}
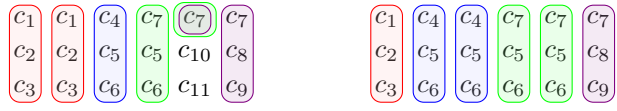

  We continue with presenting a recursive formula for~$T$, which we split into
  multiple cases for readability.
  Values of the base cases~$T(t, 1, q, u, j, r, p)$ follow directly from the
  definition of~$T$.

  We first focus on a case~$T(t,i,q,u,j,r,p)$ when $\head(S_i) = j$ and when
  $S_{i-1} \cap S_i= \emptyset$. We skip computation at all and return \false{}
  if~$q \neq r$. Indeed, $q=r$ is required as all candidates in $S_i$ will
  appear for the first time in the constructed committee series.
  \small%
	\begin{align*}
    T(t,i,q,u,j,r,p) = \bigvee_{q' \in [\candquota], r' \in
    [\candquota]} T&(t-q-u, i-1, q', 0,\\
                   & \tail(S_{i-1}), r', \reqqual(i, p, q, u))
	\end{align*}
  \normalsize%
  In words, the formula's alternative tests whether we can achieve sufficient
  committee quality~$\reqqual(i, p, q, u)$, \emph{before} we repeat
  primitive~$S_i$ exactly~$q$ times and committee $S_{i} \cap S_{i+1} \cup F_i$
  exactly~$u$~times. Since $S_{i-1} \cap S_{i} = \emptyset$ implies that
  $F_{i-1} = \emptyset$, in the alternative we invoke~$T$ with the respective
  parameter equal to~$0$. Recall that expression $\reqqual(\cdot)$ remains
  correct even if $F_i = \emptyset$, which by Case~\ref{enum:invalid:posu}
  implies $u=0$.  

  Next, we consider $\head(S_i) = j$ under assumption that $S_{i-1} \cap S_{i}
  \neq \emptyset$. Consequently, $q<r$, as the $j$-th candidate belongs to
  $S_{i-1} \cap S_{i}$, and $S_{i-1}$ has to be repeated at least once (see the
  domain of~$T$). So, if $q \geq r$ we return \false{} and otherwise we compute~$T$
  as follows:
  \small%
	\begin{align*}
    T(t,i,q,u,j,r,p) = \bigvee_{q' \in [\candquota], u' \in
    [\candquota]}
                       T(&t-q-u, i-1, q', u', j,\\
			&r - q, \reqqual(i, p, q, u)).
	\end{align*}
	\normalsize%
  This time, we ensure that before using primitive~$S_i$ and interim
  candidates~$F_i$, we used the $j$-th candidate exactly $r-q$~times.  We do not
  subtract~$u$, as by the fact that primitives are different, $j$-th candidate
  cannot be part of $(S_{i} \cap S_{i+1}) \cup F_i$, which is repeated
  $u$~times.
  Accordingly, the alternative iterates over all possible values of~$q'$
  and~$u'$ to check if there is any committee series using primitives up
  to~$S_{i-1}$ of the required quality (see the discussion in the previous
  case).

  We continue with computing $T(t,i,q,u,j,r,p)$ for cases, in which the $j$-th
  candidate is either an intermediate or a tail candidate. Consequently,
  the $j$-th candidate (from the respective primitive~$S_i \in \PSdivsets$) can
  be part of $\ell$-intermediate committees for $\ell \geq i$, i.e.,\ that come
  after all copies of~$S_i$ in the built committee series~$\series$. Previously,
  when the $j$-th candidate was a head candidate, it could only be part of the
  $\ell$-intermediate committees for~$\ell < i$. For this reason, the following
  formulas sometimes impose additional conditions. 
  
  We start with a simple case. Let the $j$-th candidate $c = \rho(j)$ be the
  intermediate or tail of primitive~$S_i$. Assume that  $c \notin (S_i \cap
  S_{i+1})$.
  \small%
	\begin{align*}
		T(t&,i,q,u,j,r,p)=
    \bigvee_{r' \in \{r, r+1, \ldots, \candquota\}} T(t, i, q, u, j-1, r', p)
	\end{align*}
  \normalsize%
  The number repetitions of $S_i$ and the score~$p$ is always fixed
  during the computation for $S_i$'s head candidate. So in this case the
  alternative recursively mostly ``carries on'' function values from previous candidates
  of~$S_i$. We iterate using $r'$ because the $j-1$-th candidate could be part
  of more (previous) committees than candidate~$c$.
  Due to our interval requirements and sorting
  of~$S_i \in \PSdivsets$, we know that $r' \geq r$.
  
  Next, consider computing $T(t,i,q,u,j,r,p)$ assuming that the $j$-th
  candidate~$c = \rho(j)$ is an intermediate or tail candidate, $c \in S_i \cap
  S_{i+1}$, and $c \notin S_{i-1}$. We repeat~$c$ exactly $q+u$~times as a
  member of $q$~copies of~$S_i$ and of $u$~copies of $i$-intermediate committee.
  Hence, we immediately return \false{} if~$r \neq q+u$, and otherwise we have
  \small%
	\begin{align*}
		T(t&,i,q,u,j,r,p)=
    \!\!\bigvee_{r' \in \{q, q+1, \ldots, \candquota\}}
      \!\!T(t, i, q, u, j-1, r', p).
	\end{align*}
  \normalsize%
  The lower-bound of iterator $r'$ follows from the fact that the $j-1$-th
  candidate must be repeated at least as many times as the primitive~$S_i$ to
  which the candidate belongs is repeated.  

  In the final case we consider the, intermediate or tail, $j$-th candidate $c =
  \rho(j)$ such that $c \in S_i \cap S_{i+1}$, and $c \in S_{i-1}$. Since~$c$
  is used in copies of primitive~$S_{i-1}$, we return \false{} if $q + u \leq r$.
  Otherwise, we compute $T$ as follows:

  \small%
	\begin{align*}
    T&(t,i,q,u,j,r,p) =
    \!\!\bigvee_{r' \in \{q, q+1, \ldots, \candquota\}}\!\!
        T(t, i, q, u, j-1, r', p) \land  \\
       & 
       \!\!\!\bigvee_{q' \in [\candquota], u'\in [\candquota]}
       \!\!\!T(t-q-u, i-1, q', u', j, r - q - u, \reqqual(i, p, q, u)).
	\end{align*}
  \normalsize%

  On top on all conditions of the counterpart case with~$c \notin S_{i-1}$
  described directly above this case, we add further requirements. Observe that
  here $c$ is used $q+u$~times but it was also previously used. The second
  alternative of the formula ensures that these $q+u$~uses of candidate~$c$
  would not make~$c$ exceed the frequency~$\candquota$.

  The case distinctions is exhaustive so the correctness of the formula stems
  directly from its description. To compute each value of function~$T$, we need
  at most $O(\candquota^2)$~steps.  Further, for some division~$\PSdivision$ of
  size~$\PSdivisionSize$, we need to compute $\PSdivisionSize \cdot
  \commsize\timelimit\candquota^3 \cdot \textrm{poly}(\votersnr,
  \candidatesnr)$~values of~$T$, as there is~$\PSdivisionSize \commsize$~valid
  pairs of parameters~$i$ and~$j$, $t \in [\timelimit]$, $(q, u, r) \in
  [\candquota]^3$. Hence, we obtain a polynomial running-time for each division.
  \begin{figure}\centering
		\begin{tikzpicture}[
			cand/.style={draw=none,fill=none},
			highligh/.style={draw=none, opacity=0.75, inner xsep=-1px,
			inner ysep=-1px},
			tight matrix/.style={every outer matrix/.append style={inner sep=+0pt}}]

		  \matrix (m) [matrix of nodes, nodes in empty cells,
		  	row sep = -\pgflinewidth,
				column sep = -\pgflinewidth,
				nodes={text width=1.2em, minimum size = 1em, anchor=center,
			  	draw = black!20!white, outer sep = 0em, inner sep = 0em, align =
		    	center},
				row 5/.style={nodes={draw=none, font = \scriptsize}},
				column 1/.style={nodes={draw=none, text width = 3em, font = \scriptsize}}] {
					$i=1$ & $\bullet$ & & $\times$ & & & & & & \\
		  	  $i=2$ & & & & $\bullet$ & & $\times$ & & & \\
		  	  $i=3$ & & & & & $\bullet$ & & $\times$ & & \\
				  $i=4$ & & & & & & & $\bullet$ & & $\times$ \\
		  	  & $c_1$ & $c_2$ & $c_3$ &$c_4$ & $c_5$ & $c_6$ & $c_7$ & $c_8$ & $c_9$ \\
      };

			\node[highligh, fill=red!20!white, fit = (m-1-2) (m-1-4)] {}; 
			\node[highligh, fill=blue!20!white, fit = (m-2-5) (m-2-7)] {}; 
			\node[highligh, fill=green!20!white, fit = (m-3-6) (m-3-8)] {}; 
			\node[highligh, fill=violet!20!white, fit = (m-4-8) (m-4-10)] {}; 

			\node[highligh, fill=black!30!white, fit = (m-2-2) (m-2-4)] {}; 
			\node[highligh, fill=black!30!white, fit = (m-3-2) (m-3-5)] {}; 
			\node[highligh, fill=black!30!white, fit = (m-4-2) (m-4-7)] {}; 

			\node[highligh, fill=black!30!white, fit = (m-1-5) (m-1-10)] {}; 
			\node[highligh, fill=black!30!white, fit = (m-2-8) (m-2-10)] {}; 
			\node[highligh, fill=black!30!white, fit = (m-3-9) (m-3-10)] {}; 

    \end{tikzpicture}
		\caption{A grid-representation of division~$\PSdivision$ from~\Cref{ex:division} with
    the heads ($\bullet$) and tails ($\times$) of the primitives. Shaded boxes
    represent invalid value combinations of parameters $i$ and~$j$ of
    function~$T$. For readability, we omit the sets of interim
    candidates.}\label{fig:grid-representation}
	\end{figure}
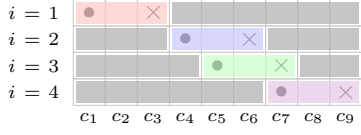

  \paragraph{Constructing Guesses and Running Time.} We run the DP procedure
  for each possible division $\PSdivision = (\rho, \PSdivsets, \PSdivfil)$. We
  first select some
  order~$\rho$ of candidates out of~$\candidatesnr!$ of them. Then, we guess the
  division size~$\PSdivisionSize$. Note that $\PSdivsets = (S_1, S_2, \ldots,
  S_\PSdivisionSize)$ can be represented as a vector $(s_1, s_2, \ldots,
  s_\PSdivisionSize)$, where $s_i$, $i \in [\PSdivisionSize]$, denotes the
  rightmost candidate of the interval of committee~$S_i$ given ordering~$\rho$.
  Hence, the guessed division size~$\PSdivisionSize$ varies from one to at
  most~$\candidatesnr$, where the upper bound is not tight and describes the
  maximum number of distinct equisized interval committees
  coming from taking
  $\PSdivisionSize=\candidatesnr-\commsize$ and $s_i = \commsize+i$, for $i
  \in [\PSdivisionSize]$.
  Importantly, $\tau$ does not contribute to the number
  of possible values of~$\PSdivisionSize$, as it is only the DP procedure that
  achieves $\tau$~committees for some~\PSdivision.
  Next, we guess each entry of the~\PSdivsets{} vector,
  except for the first one which is fixed. To this end, we choose the value of
  each $s_i$, $i \in \{2, 3, \ldots, \PSdivisionSize\}$, from range
  $\{s_{i-1}+1, s_{i-1}+2, \ldots, s_{i-1} + \commsize + 1\}$ (recall that all
  primitives in~\PSdivsets{} have to form an interval according~$\PSorder$).
  Finally, for each non-empty intersection $X$ of neighboring primitives, of
  which there are at most~$\PSdivisionSize$, we either select an empty interim
  set or take next unused~$\commsize - |X|$~candidates according to~$\PSorder$
  (as long as the latter is possible). This totals in at most
  $\candidatesnr!  \cdot \candidatesnr \cdot (\commsize+1)^{\candidatesnr} \cdot
  2^\candidatesnr$ guesses. Overall, the algorithm is clearly fixed-parameter
  tractable with respect to~$\candidatesnr$.
\end{proof}

The approach behind~\Cref{thm:fpt-m} in fact works for any quality measures of
committee series quality and committee scoring function as long as their values
are polynomially bounded by the input size. This leads to a general result
covering much more scoring functions than those described in~\Cref{sec:prelim}.
\begin{corollary}
  There is an $\fpt(m)$-algorithm with running time $\OO^\star(m!(k-1)^m)$ solving
  \pname{\alpha}{\beta} for arbitrary computable aggregation~$\alpha$
  and a committee scoring function~$\beta$ whose values are polynomially bounded
  in the instance size.
\end{corollary}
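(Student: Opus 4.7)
The plan is to retrace the proof of~\Cref{thm:fpt-m} and identify which steps actually depend on the specific choice of $\alpha \in \{\utilOp, \egalOp\}$ or on the particular committee scoring functions~$\beta$ studied in~\Cref{sec:prelim}. The outer enumeration of divisions $\PSdivision = (\PSorder, \PSdivsets, \PSdivfil)$, which is the source of the combinatorial factor in~$m$ and~$k$, is purely structural: it does not mention~$\alpha$ or~$\beta$ at all. So the only parts to revisit are the definition of the DP function~$T(t,i,q,u,j,r,p)$ and its update expression~$\reqqual(i,p,q,u)$.

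First I would note that the parameter~$p$ ranges over target values of~$\alpha(\beta(\series))$; the resulting state space is polynomial exactly when the range of~$p$ is polynomial in the input size. Under the hypothesis that $\beta$'s values are polynomially bounded, and since a series aggregates at most~$\timelimit$ (polynomially many) such values, $\alpha(\beta(\series))$ lies in a polynomially-bounded range for any reasonable computable~$\alpha$. Consequently the total number of reachable DP states for a fixed division stays polynomial in the input size, and combining this with the unchanged outer enumeration directly yields the claimed~$\OO^\star(m!(k-1)^m)$ running time.

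Second, I would adjust the transition~$\reqqual$ to match the target~$\alpha$: for utilitarian-flavored aggregations the additive/subtractive form from the current proof applies verbatim to any polynomially-bounded~$\beta$, while for egalitarian-flavored ones the subtraction is replaced by a min-based check that ensures every repeated primitive and every intermediate committee has $\beta$-score at least~$p$. The case analysis (head, intermediate, and tail of a primitive, with or without intersection with the previous primitive) is purely combinatorial and transfers without change. I expect the main obstacle to be accommodating genuinely \emph{arbitrary} computable~$\alpha$ rather than just these two natural families; the cleanest fix is to treat~$\alpha$ as a black box acting on a running quality value and to let the DP enumerate all reachable quality values of each prefix, which is feasible precisely because the polynomial boundedness of~$\beta$ caps the number of such values. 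The correctness proof then reduces to the analogous argument from~\Cref{thm:fpt-m}, and the running-time bound follows unchanged.
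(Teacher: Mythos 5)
Your proposal is correct and follows essentially the same route as the paper, which justifies this corollary only by the remark that the division-enumeration and DP of \Cref{thm:fpt-m} are agnostic to $\alpha$ and $\beta$ so long as the quality values (and hence the DP parameter~$p$) range over a polynomially bounded set. If anything, you are more careful than the paper in flagging that a truly \emph{arbitrary} computable~$\alpha$ needs to be decomposable into a prefix-accumulator form for the single running value~$p$ to suffice, which is a reasonable reading of what the authors intend.
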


\section{The Case of Short Time Horizon}

The algorithms from the previous section are no longer of use
for large pools of candidates. In such cases, approaches tailored to small values of
parameter~\timelimit{}, which is a formal framing of short time horizon in our
model, might come to the rescue. Imagine, for example, an online
streaming platform building daily music recommendation for the following week,
based on a user's past-week activity. While such a task might likely include even
hundreds of songs, it is still only seven lists (committees) that we want to
arrange in a series.

In most cases that we study, however, we are not expecting \fpt{}-algorithms for
parameter~$\timelimit$. The
Chamberlin--Courant rule, in its variants for approval as well as ordinal
preferences, is \wtwo{}-hard for parameter ``committee size'' already in the
multiwinner voting
model~\citep{lu-bou:c:budgeted-social-choice,bet-sli-uhl:j:mon-cc-multivariate}.
Because these results are special cases of \pname{\alpha}{\beta} for both
committee series quality variants and $\beta \in \{\cc, \egalCCScoreA^\gamma,
\egalCCScoreO, \coverageScore\}$ for $\timelimit=1$, there is no hope for
fixed-parameter tractability for parameter~$\commsize + \timelimit$ in this
case.

Consequently, we focus on \fpt{}-algorithms with respect to $\timelimit$
assuming a constant committee size. On the one hand, such results are a clear
advancement from the theoretical perspective given that \pname{\utilOp}{\beta}
and \pname{\egal}{\beta} both are \nphard{} even for the constant committee size
for almost every committee scoring~$\beta$ that we
study~\citep{bre-kac-nie:c:successive-committees} (the exceptions are
\pname{\utilOp}{\WSFamily} and \pname{\utilOp}{\approvalScore}). On the other
hand, constant values of~$\commsize$ are still interesting from the practical
point of view as small committee sizes might appear in practice. For example, in
our toy example about daily music recommendation, providing too many songs each
day would be overwhelming for the user and ineffective in promoting featured
tracks.

Our first algorithm reinterprets \pname{\utilOp}{\beta} with $\candquota=1$
as an instance of~\wsetpacking{} and then applies a known procedure
of~\citet{DBLP:journals/siamdm/GoyalMPZ15} for solving the latter.

\begin{restatable}[\appsymb]{theorem}{utilfpttfone}
  There exists an algorithm that solves \pname{\utilOp}{\beta} in
  $\OO^\star(2.851^{(\commsize-0.5501)\timelimit})$~time for all studied~$\beta$ when
  the committee size $\commsize$ is constant and $\candquota=1$.
\label{thm:util-fpt-tf1}\end{restatable}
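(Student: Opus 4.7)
The plan is to exploit the structural simplification that $\candquota=1$ imposes on any consecutive $\candquota$-frequency committee series: every candidate may appear in at most one committee across the whole series, so the committees $W_1, W_2, \ldots, W_\timelimit$ of any feasible series must be pairwise disjoint size-$\commsize$ subsets of $\candidates$. Conversely, since the committee scoring function $\beta$ depends only on the committee itself (not on its position) and the utilitarian aggregation $\utilOp(\beta(\series)) = \sum_i \beta(W_i)$ is symmetric, any collection of $\timelimit$ pairwise disjoint size-$\commsize$ subsets of $\candidates$ can be arranged in an arbitrary linear order to form a feasible consecutive $1$-frequency series of the same utilitarian quality. Thus the problem reduces to selecting $\timelimit$ pairwise disjoint size-$\commsize$ subsets of $\candidates$ maximizing the sum of $\beta$-values, and comparing this optimum against $\vallowerbound$.

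Next, I would formally recast this selection task as a \wsetpacking{} instance. The ground set is $\candidates$; the family consists of all size-$\commsize$ subsets of $\candidates$, of which there are $\binom{\candidatesnr}{\commsize}$, a polynomial quantity for constant $\commsize$; and each set $W$ carries weight $\beta(W)$, which is computable in polynomial time for every studied $\beta$. The goal is to pack $\timelimit$ pairwise disjoint sets of maximum total weight. Building this instance clearly takes polynomial time under our assumptions on $\commsize$.

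The final step is to invoke the algorithm of~\citet{DBLP:journals/siamdm/GoyalMPZ15} for weighted $r$-set packing with $r = \commsize$, which produces an optimal packing of $\timelimit$ disjoint size-$\commsize$ sets in time $\OO^\star(2.851^{(\commsize-0.5501)\timelimit})$. Because the reduction itself is polynomial, the overall running time matches the claimed bound. The main thing to get right is the equivalence argument in the first paragraph: one must be explicit that (a)~repeating any committee in the series would immediately place some candidate in two (necessarily consecutive) committees and hence violate $\candquota=1$, so committees are not merely distinct but pairwise disjoint, and (b)~once committees are pairwise disjoint, the consecutiveness constraint is vacuous and the linear order within the series is irrelevant under utilitarian aggregation. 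Once this reinterpretation is pinned down, the theorem follows from a black-box invocation of the known weighted set packing algorithm.
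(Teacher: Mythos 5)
Your proposal is correct and follows essentially the same route as the paper: both reduce \pname{\utilOp}{\beta} with $\candquota=1$ to \wsetpacking{} over the family of all size-$\commsize$ subsets of $\candidates$ weighted by $\beta$, and then invoke the algorithm of \citet{DBLP:journals/siamdm/GoyalMPZ15}. Your explicit discussion of why disjointness makes the consecutiveness constraint and the ordering vacuous is exactly the equivalence the paper's correctness lemma establishes.
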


Next, we prove the result for constant $\candquota\geq 2$. Without loss of
generality, we assume that $\timelimit>\candquota$. Otherwise, we find a
committee of maximum score in polynomial time and repeat it $\candquota$~times.
Since $\commsize$
is a constant, computing a maximum score committee is polynomial-time solvable
for every committee scoring~$\beta$ that we study.

\begin{restatable}[\appsymb]{theorem}{utilfpttftwoormore}
  There exists an algorithm that solves \pname{\utilOp}{\beta} in
  $\OO^\star\left(2^{\commsize\timelimit(\candquota+1)}(2e)^{\commsize\timelimit}(\commsize\timelimit)^{\log
  (\commsize\timelimit)}\right)$ time for all studied~$\beta$ when the committee
  size $\commsize$ and $\candquota$ are constant.%
\label{thm:util-fpt-tf2}\end{restatable}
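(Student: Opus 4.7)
The plan is to combine derandomized color coding with a dynamic program over committees. First, I would observe that any feasible series~$\series = (W_1,\ldots,W_{\timelimit})$ uses at most $\commsize\timelimit$ distinct candidates, because the total number of candidate slots across all committees is $\commsize\timelimit$. Consequently, if I color the candidates by a function $h \colon \candidates \to [\commsize\timelimit]$ drawn from an $(\candidatesnr, \commsize\timelimit)$-perfect hash family $\mathcal{H}$, then for at least one $h \in \mathcal{H}$ the candidates of an optimal solution receive pairwise distinct colors. The splitter-based construction of Naor, Schulman, and Srinivasan produces such $\mathcal{H}$ of size $(2e)^{\commsize\timelimit} \cdot (\commsize\timelimit)^{\OO(\log(\commsize\timelimit))} \cdot \log \candidatesnr$, which exactly accounts for the $(2e)^{\commsize\timelimit}(\commsize\timelimit)^{\log(\commsize\timelimit)}$ factor in the claimed running time.

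For each fixed $h \in \mathcal{H}$ I would then search for the best \emph{colorful} series, i.e., one in which the candidates used across the series have pairwise distinct colors, via a left-to-right dynamic program. A state after having decided committee~$W_t$ records, for each of the $\commsize\timelimit$ colors, its profile within the sliding window $\{W_{t-\candquota},\ldots,W_t\}$, namely which of the window's committees contain a candidate of that color, plus a flag indicating that the color has already been \emph{retired} (its block ended before the window). Together, this contributes at most $2^{(\candquota+1)\commsize\timelimit}$ distinct states. Additionally, for each color currently active in~$W_t$ the state stores the concrete candidate from $h^{-1}(\cdot)$ realizing it, which is needed to evaluate the (possibly non-separable) score $\beta(W_{t+1})$ of any continuation; this adds only a polynomial factor for constant~$\commsize$. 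A transition to~$W_{t+1}$ selects the next size-$\commsize$ set of colors, continues existing blocks provided they would not exceed length $\candquota$, starts a new block for any fresh color by committing to one available candidate from its color class, retires every color whose block drops out of the window, and adds $\beta(W_{t+1})$ to the accumulated quality. Correctness stems from the window being exactly what is needed to enforce the $\candquota$-consecutiveness constraint, and from colorfulness guaranteeing candidate distinctness.

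Putting the parts together, the DP performs $\timelimit$ steps with $2^{(\candquota+1)\commsize\timelimit}$ states and $\OO^\star(1)$-time transitions once constant~$\commsize$ is factored in, giving $\OO^\star(2^{(\candquota+1)\commsize\timelimit})$ per hash function; combined with $|\mathcal{H}|$ this matches the claim. The main technical obstacle I anticipate is making the window encoding tight enough to reach exactly the $(\candquota+1)\commsize\timelimit$ exponent while still carrying enough information to (i) unambiguously enforce the $\candquota$-consecutiveness constraint at window boundaries, and (ii) evaluate the committee score uniformly across the non-separable criteria $\cc$, $\egalCCScoreO$, $\coverageScore$, and $\egalCCScoreA^\gamma$ together with the additive $\approvalScore$ and members of $\WSFamily$.
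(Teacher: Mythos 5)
Your proposal is correct and follows essentially the same route as the paper: color the candidates with $\commsize\timelimit$ colors via an $(\candidatesnr,\commsize\timelimit)$-perfect hash family (Naor--Schulman--Srinivasan), and for each coloring run a dynamic program that looks for a colorful series while enforcing the frequency constraint through a sliding window of $\candquota+1$ committees' worth of color information, yielding the $2^{\commsize\timelimit(\candquota+1)}$ state bound. The paper encodes the window as the color sets of the last $\candquota$ committees plus the union of all earlier colors (with per-committee scores as extra DP dimensions), whereas you encode per-color window profiles with realizing candidates, but this is only a cosmetic difference in bookkeeping.
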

The proof of \Cref{thm:util-fpt-tf2} heavily depends on the following randomized
algorithm.

\begin{restatable}[\appsymb]{theorem}{randegalfptt}
   There exists a randomized algorithm that given an instance of
   \pname{\utilOp}{\beta} for all studied~$\beta$ either reports \failure{} or
   outputs a solution in $\OO^\star(2^{k\timelimit(f+1)}(2e)^{k\timelimit})$
   time. Moreover, if the algorithm is given a \yes-instance, it returns \yes{}
   with probability at least $\nicefrac{1}{2}$, and if the algorithm is given a
   \no-instance, it returns \no{} with probability 1.
\label{thm:rand-egal-fpt-t}\end{restatable}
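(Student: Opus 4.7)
The plan is to apply the color-coding technique combined with a dynamic program sweeping the time horizon. The starting observation is that any solution $\series = (W_1, \ldots, W_\timelimit)$ contains at most $\commsize\timelimit$ distinct candidates across its committees, since each of the $\timelimit$ committees has size $\commsize$. I would therefore color $\candidates$ uniformly at random with $N := \commsize\timelimit$ colors. Conditional on a fixed optimum, the event that its candidates receive pairwise distinct colors (a \emph{colorful} trial) occurs with probability at least $N!/N^N \geq e^{-N}$, so $\OO(e^{\commsize\timelimit})$ independent trials suffice to push the overall success probability above $\nicefrac{1}{2}$. This accounts for the $e^{\commsize\timelimit}$ portion of the $(2e)^{\commsize\timelimit}$ factor.

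Conditioned on a colorful trial, I would run a DP that searches for the best committee series in which every used candidate has a distinct color. Sweeping committees left to right, the DP assigns, to each of the $N$ colors, one label from $\{\mathsf{unused}, 1, 2, \ldots, \candquota, \mathsf{expired}\}$, where an integer $r$ indicates the color is currently active and has been in the ongoing block for $r$ committees, while $\mathsf{expired}$ means its unique block has already ended. The alphabet has size $\candquota + 2 \leq 2^{\candquota+1}$, so the state space has at most $2^{\commsize\timelimit(\candquota+1)}$ configurations per step, matching the first factor of the claimed running time. Transitions $i \to i+1$ perform only local moves (increment the timer of a kept color, expire it, or promote an $\mathsf{unused}$ color to active with timer~$1$) and enforce that exactly $\commsize$ colors are active in the new committee. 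Consecutiveness and the $\candquota$-frequency requirement are enforced automatically, since expired colors cannot reactivate and timers cap at~$\candquota$.

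The main obstacle is evaluating the committee scoring function $\beta$ on the concrete committee at each step, because $\cc$, $\coverageScore$, $\egalCCScoreO$, and $\egalCCScoreA^\gamma$ depend on the \emph{identity} of the candidate filling each color slot, not merely on the color structure. I would handle this by binding each color to a concrete candidate from its color class at the moment the color is promoted and carrying this binding through the block: for additive scores such as $\approvalScore$ and any weakly-separable $\beta$, the optimum candidate per class is fixed globally and can be precomputed at no cost; for non-additive $\beta$, tracking candidate identity alongside the label adds only polynomial-in-$m$ overhead absorbed into $\OO^\star$, since the binding for each color is chosen only once along the DP. Correctness on \no-instances is automatic—we never output a series we did not verify—and correctness on \yes-instances follows from the boosting argument, yielding success probability at least~$\nicefrac{1}{2}$.
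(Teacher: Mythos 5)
Your proposal is correct and lives in the same color-coding framework as the paper's proof: both color the candidates uniformly at random with $k\timelimit$ colors, argue that a fixed solution becomes \colorful with probability at least $e^{-k\timelimit}$, boost over $e^{k\timelimit}$ independent trials, and then find a \colorful solution by a dynamic program whose state space is bounded by $2^{k\timelimit(\candquota+1)}$. The DP itself is organized quite differently. The paper first filters the family of all $\binom{m}{k}$ concrete committees down to the rainbow ones and indexes its table by the \emph{union of colors} of the first $i-\candquota$ committees together with the individual color sets and scores of the last $\candquota$ committees, $T[i,X_1,\ldots,X_{\candquota+1},\ell_1,\ldots,\ell_{\candquota+1}]$; disjointness of the newest color set from $X_1$ enforces the frequency bound, and the identity-dependence of $\beta$ is resolved once, in the base case, by evaluating $\beta$ on concrete sets and then remembering only (color set, score) pairs. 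Your DP instead keeps a per-color status vector (unused / timer $r\le \candquota$ / expired) with explicit candidate bindings for the $k$ currently active colors; the timers and the no-reactivation rule enforce contiguity and frequency locally, and the bindings (only $m^{k}$ of them, polynomial for constant $k$) let you evaluate non-additive $\beta$ on concrete committees at every step. Since $(\candquota+2)^{k\timelimit}\le 2^{(\candquota+1)k\timelimit}$, the two state spaces have the same size and the running times match. What each buys: the paper's window representation avoids carrying candidate identities entirely, while yours makes the frequency/contiguity argument more transparent and extends more directly to scoring functions evaluated on-the-fly. The one point you leave implicit is how the accumulated utilitarian score enters the state — either as a polynomially bounded coordinate (as in the paper) or by storing the maximum achievable score per state, which is valid here because the transition score depends only on the successor's active bindings — but this is routine.
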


\begin{proofsketch}
    A solution $\Co{S}=(S_1,\ldots,S_{\timelimit})$ is called \emph{\colorful}
    if every pair of candidates $s,s'$ in $\uplus_{i\in [\timelimit]}S_i$ has
    distinct colors. Now, our algorithm runs in three phases. In the first
    phase, we uniformly and independently at random color the candidates forming
    our solution. Then, we discard the sets that cannot be part of our solution
    under the coloring. In the third phase, we use dynamic programming to turn
    the coloring into a \colorful{} solution, if the coloring admits one. 
    \end{proofsketch}

We obtain~\Cref{thm:util-fpt-tf2} by derandomizing the algorithm
of~\Cref{thm:rand-egal-fpt-t}. To
this end we employ a \emph{$(p,q)$-perfect hash family}~\citep{alon1995color}
that we construct in the requested time applying the results
of~\citet{naor1995splitters} and~\citet{ParamAlgorithms15b}.

We provide more good news, by presenting an analogous result for the egalitarian
committee series evaluation. By this, we cover all of our cases of interest (for
constant~$\commsize$).

\begin{restatable}[\appsymb]{theorem}{egalfpttf}
  There exists an algorithm that solves \pname{\egal}{\beta} in
  $\OO^\star(2.851^{(k-0.5501)\timelimit})$ time for all studied~$\beta$ when
  the committee size $\commsize$ is constant.
\label{thm:egal-fpt-tf1}\end{restatable}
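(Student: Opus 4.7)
The plan is to reduce the egalitarian problem to the $\candquota = 1$ case using \Cref{prop:egal2-egal1} and then invoke essentially the same set-packing machinery as in \Cref{thm:util-fpt-tf1}. First, I would apply \Cref{prop:egal2-egal1} to transform any given instance in linear time into an equivalent instance with $\candquota = 1$, preserving the committee size $\commsize$ (as explicitly noted after that proposition) and the series length $\timelimit$—this latter preservation is exactly what lets the running time match \Cref{thm:util-fpt-tf1}. With $\candquota = 1$ the $\timelimit$ committees of any feasible series are pairwise disjoint, and since $\beta(W)$ depends only on $W$, the condition $\egalOp(\beta(\committeesSeries)) \geq \vallowerbound$ is equivalent to the conjunction, over all $W$ in the series, of the independent per-committee constraints $\beta(W) \geq \vallowerbound$. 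In particular, the order of the committees ceases to matter.

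Second, I would enumerate in polynomial time the collection
\[
  \mathcal{G} \coloneqq \{W \subseteq \candidates : |W| = \commsize,\ \beta(W) \geq \vallowerbound\}
\]
of \emph{admissible} committees. This is tractable because $\commsize$ is constant (so $|\mathcal{G}| \leq \binom{\candidatesnr}{\commsize}$ is polynomial) and $\beta$ is polynomial-time computable for every scoring function under study. The existence of a solution is then equivalent to the existence of $\timelimit$ pairwise disjoint sets in $\mathcal{G}$, i.e.\ a classical \textsc{Set Packing} instance with all sets of size exactly $\commsize$ and target packing size $\timelimit$.

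Finally, I would solve that set-packing instance by invoking the algorithm of~\citet{DBLP:journals/siamdm/GoyalMPZ15} in exactly the same way as in the proof of \Cref{thm:util-fpt-tf1}: uniform weights suffice here because the only optimization (the threshold $\vallowerbound$) is already baked into the definition of $\mathcal{G}$, so we merely need to test whether a packing of the prescribed size exists. This immediately yields the claimed $\OO^\star(2.851^{(\commsize-0.5501)\timelimit})$ bound.

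I do not anticipate a substantial obstacle, since the entire argument is a filter-then-pack reduction stacked on top of two black-box results. The subtlest point—and the only one worth double-checking—is that the reduction of \Cref{prop:egal2-egal1} leaves $\timelimit$ intact (because $\timelimit$ appears in the exponent of the target bound). Given that observation, the proof is essentially a short verification that the egalitarian objective decouples completely once $\candquota = 1$, followed by reuse of the packing algorithm already deployed for the utilitarian case.
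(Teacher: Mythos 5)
Your proposal is correct and takes essentially the same route as the paper: the paper likewise handles the $\candquota=1$ case by the reduction to \textsc{Set Packing} from \Cref{thm:util-fpt-tf1}, keeping only sets $S$ with $\beta(S)\geq\eta$ in the family, and then lifts the result to arbitrary $\candquota$ via \Cref{prop:egal2-egal1}. The only cosmetic difference is the order of presentation (the paper treats $\candquota=1$ first and generalizes afterwards), and your explicit remark that the objective decouples into per-committee thresholds is exactly the filtering step the paper performs.
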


\section{Conclusions and Future Directions}
We provided the first algorithms for solving hard instances emerging from the model
of successive committee elections~\citep{bre-kac-nie:c:successive-committees}.
Extending the algorithmic understanding of a recent area of temporal elections,
our results concern potentially practical scenarios including small numbers of
candidates and a short time horizon. %
In the light of increasing resonance of experiments in computational social
choice~\citep{boe-fal-jan-kac---:c:guide},
our algorithms potentially
enable experimental
study on successive committee elections.

Theoretical follow-up directions include complementing our picture 
with parameterizations by ``number~$\votersnr$ of voters'' and ``committee
series quality~\vallowerbound.'' While both these parameters might turn out to
be quite large in practical applications, obtaining the respective results would
complete the picture of the parameterized complexity of the studied problems. We
note that parameter~$\vallowerbound$ in part inherits technical intricacies of
the study of the parametrization by (mis)representation for Chamberlin--Courant
rules~\citep{bet-sli-uhl:j:mon-cc-multivariate,che-roy:x:CC-intractability-misrepresentation}.
Importantly,
\citet[Theorem 5]{bre-kac-nie:c:successive-committees} implicitly excludes fixed-parameter tractability
of~\vallowerbound{} for
\pname{\egalOp}{\egalCCScoreA^1(W)}, while our results yield
it for
 \pname{\utilOp}{\cc} and \pname{\utilOp}{\coverageScore}
for~$\eta$.

On the more practical
side, our algorithms require an empirical treatment to verify its potential
applicability in practice. As our study only gives pessimistic running-time
upper bounds, a thorough investigation based on synthetic and real-life instances is
needed to check the algorithms behavior on more realistic input
instances.

\newpage

\section*{Acknowledgements}
We thank anonymous reviewers for their helpful comments. Pallavi Jain
acknowledges her support from SERB-SUPRA (grant number SPR/2021/000860) and IITJ
Seed Grant (grant no I/SEED/PJ/20210119). Andrzej Kaczmarczyk acknowledges
support from NSF CCF-2303372 and ONR N00014-23-1-2802.
This project
has received funding from the European Research Council (ERC) under the European
Union’s Horizon 2020 research and innovation programme (grant agreement No
101002854).

\begin{center}
  \includegraphics[width=3cm]{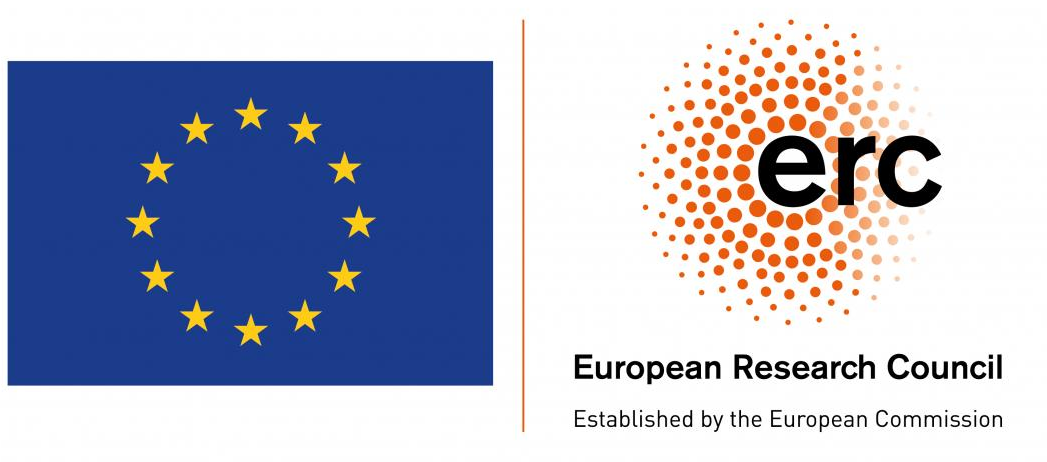}
\end{center}

\bibliographystyle{plainnat}
\bibliography{main}

\begin{thebibliography}{26}
\providecommand{\natexlab}[1]{#1}
\providecommand{\url}[1]{\texttt{#1}}
\expandafter\ifx\csname urlstyle\endcsname\relax
  \providecommand{\doi}[1]{doi: #1}\else
  \providecommand{\doi}{doi: \begingroup \urlstyle{rm}\Url}\fi

\bibitem[Alon et~al.(1995)Alon, Yuster, and Zwick]{alon1995color}
Noga Alon, Raphael Yuster, and Uri Zwick.
\newblock Color-coding.
\newblock \emph{Journal of the ACM (JACM)}, 42\penalty0 (4):\penalty0 844--856, 1995.

\bibitem[Betzler et~al.(2013)Betzler, Slinko, and Uhlmann]{bet-sli-uhl:j:mon-cc-multivariate}
Nadja Betzler, Arkadii Slinko, and Johannes Uhlmann.
\newblock On the computation of fully proportional representation.
\newblock \emph{Journal of Artificial Intelligence Research}, 47:\penalty0 475--519, 2013.

\bibitem[Boehmer and Niedermeier(2021)]{boe-nie:c:broadening-agenda-comsoc}
Niclas Boehmer and Rolf Niedermeier.
\newblock Broadening the research agenda for computational social choice: Multiple preference profiles and multiple solutions.
\newblock In \emph{Proceedings of AAMAS-2021}, pages 1--5, 2021.

\bibitem[Boehmer et~al.(2024)Boehmer, Faliszewski, Janeczko, Kaczmarczyk, Lisowski, Pierczynski, Rey, Stolicki, Szufa, and Was]{boe-fal-jan-kac---:c:guide}
Niclas Boehmer, Piotr Faliszewski, Lukasz Janeczko, Andrzej Kaczmarczyk, Grzegorz Lisowski, Grzegorz Pierczynski, Simon Rey, Dariusz Stolicki, Stanislaw Szufa, and Tomasz Was.
\newblock Guide to numerical experiments on elections in computational social choice.
\newblock In \emph{Proceedings of IJCAI-2024 (Survey Track)}, pages 7962--7970, 2024.

\bibitem[Bredereck et~al.(2020)Bredereck, Kaczmarczyk, and Niedermeier]{bre-kac-nie:c:successive-committees}
Robert Bredereck, Andrzej Kaczmarczyk, and Rolf Niedermeier.
\newblock Electing successive committees: Complexity and algorithms.
\newblock In \emph{Proceedings of AAAI-2020}, pages 1846--1853, 2020.

\bibitem[Bredereck et~al.(2022)Bredereck, Fluschnik, and Kaczmarczyk]{bre-flu-kac:c:votes-change-committees-not}
Robert Bredereck, Till Fluschnik, and Andrzej Kaczmarczyk.
\newblock When votes change and committees should (not).
\newblock In \emph{Proceedings of IJCAI-2022}, pages 144--150, 2022.

\bibitem[Bulteau et~al.(2021)Bulteau, Hazon, Page, Rosenfeld, and Talmon]{bul-haz-pag-ros-tal:a:jr-for-perpetual}
Laurent Bulteau, Noam Hazon, Rutvik Page, Ariel Rosenfeld, and Nimrod Talmon.
\newblock Justified representation for perpetual voting.
\newblock \emph{IEEE Access}, 9:\penalty0 96598--96612, 2021.

\bibitem[Chamberlin and Courant(1983)]{CC83}
John~R. Chamberlin and Paul~N. Courant.
\newblock Representative deliberations and representative decisions: Proportional representation and the {B}orda rule.
\newblock \emph{American Political Science Review}, 77\penalty0 (3):\penalty0 718--733, 1983.

\bibitem[Chandak et~al.(2024)Chandak, Goel, and Peters]{cha-goe-pet:c:proportional-aggregation-for-sequential-dm}
Nikhil Chandak, Shashwat Goel, and Dominik Peters.
\newblock Proportional aggregation of preferences for sequential decision making.
\newblock In \emph{Proceedings of AAAI-2024}, pages 9573--9581, 2024.

\bibitem[Chen and Roy(2022)]{che-roy:x:CC-intractability-misrepresentation}
Jiehua Chen and Sanjukta Roy.
\newblock Parameterized intractability for multi-winner election under the {{C}}hamberlin-{{C}}ourant rule and the {{M}}onroe rule.
\newblock arXiv:2202.12006 [cs.MA], 2022.

\bibitem[Cygan and Pilipczuk(2010)]{DBLP:journals/tcs/CyganP10}
Marek Cygan and Marcin Pilipczuk.
\newblock {Exact and Approximate Bandwidth}.
\newblock \emph{Theoretical Computer Science}, 411\penalty0 (40-42):\penalty0 3701--3713, 2010.

\bibitem[Cygan et~al.(2015)Cygan, Fomin, Kowalik, Lokshtanov, Marx, Pilipczuk, Pilipczuk, and Saurabh]{ParamAlgorithms15b}
Marek Cygan, Fedor~V. Fomin, {\L}ukasz Kowalik, Daniel Lokshtanov, D\'{a}niel Marx, Marcin Pilipczuk, Micha\l Pilipczuk, and Saket Saurabh.
\newblock \emph{Parameterized Algorithms}.
\newblock Springer, 2015.

\bibitem[Deltl et~al.(2023)Deltl, Fluschnik, and Bredereck]{del-flu-bre:c:egalitarian-equitable-sequences-of-committees}
Eva~Michelle Deltl, Till Fluschnik, and Robert Bredereck.
\newblock Algorithmics of egalitarian versus equitable sequences of committees.
\newblock In \emph{Proceedings of IJCAI-2023}, pages 2651--2658, 2023.

\bibitem[Elkind et~al.(2024{\natexlab{a}})Elkind, Obraztsova, and Teh]{elk-obr-teh:c:proportionality-in-temporal-voting}
Edith Elkind, Svetlana Obraztsova, and Nicholas Teh.
\newblock Verifying proportionality in temporal voting.
\newblock In \emph{Proceedings of AAMAS-2024}, pages 2246--2248, 2024{\natexlab{a}}.

\bibitem[Elkind et~al.(2024{\natexlab{b}})Elkind, Obraztsova, and Teh]{elk-obr-teh:c:temporal-fairness}
Edith Elkind, Svetlana Obraztsova, and Nicholas Teh.
\newblock Temporal fairness in multiwinner voting.
\newblock In \emph{Proceedings of AAAI-2024}, pages 22633--22640, 2024{\natexlab{b}}.

\bibitem[Faliszewski et~al.(2017)Faliszewski, Skowron, Slinko, and Talmon]{fal-sko-sli-tal:b:multiwinner-voting}
Piotr Faliszewski, Piotr Skowron, Arkadii Slinko, and Nimrod Talmon.
\newblock Multiwinner voting: {A} new challenge for social choice theory.
\newblock In U.~Endriss, editor, \emph{Trends in Computational Social Choice}. AI Access Foundation, 2017.

\bibitem[Goyal et~al.(2015)Goyal, Misra, Panolan, and Zehavi]{DBLP:journals/siamdm/GoyalMPZ15}
Prachi Goyal, Neeldhara Misra, Fahad Panolan, and Meirav Zehavi.
\newblock Deterministic algorithms for matching and packing problems based on representative sets.
\newblock \emph{{SIAM} J. Discret. Math.}, 29\penalty0 (4):\penalty0 1815--1836, 2015.

\bibitem[Gupta et~al.(2021)Gupta, Jain, Saurabh, and Talmon]{DBLP:conf/ijcai/Gupta00T21}
Sushmita Gupta, Pallavi Jain, Saket Saurabh, and Nimrod Talmon.
\newblock Even more effort towards improved bounds and fixed-parameter tractability for multiwinner rules.
\newblock In \emph{Proceedings of IJCAI-2021}, pages 217--223, 2021.

\bibitem[Kellerhals et~al.(2021)Kellerhals, Renken, and Zschoche]{kel-ren-zsc:c:parameterized-diverse-multistage}
Leon Kellerhals, Malte Renken, and Philipp Zschoche.
\newblock Parameterized algorithms for diverse multistage problems.
\newblock In \emph{Proceedings of ESA-2021}, pages 55:1--55:17, 2021.

\bibitem[Lackner(2020)]{lac:c:perpetual-voting}
Martin Lackner.
\newblock Perpetual voting: Fairness in long-term decision making.
\newblock In \emph{Proceedings of AAAI-2020}, pages 2103--2110, 2020.

\bibitem[Lackner and Maly(2021)]{lac-mal:x:perpetual-axiomatic-lens}
Martin Lackner and Jan Maly.
\newblock Perpetual voting: The axiomatic lens, 2021.

\bibitem[Lackner and Skowron(2023)]{lac-sko:b:multiwinner-approval}
Martin Lackner and Piotr Skowron.
\newblock \emph{Multi-Winner Voting with Approval Preferences}.
\newblock Springer, 2023.

\bibitem[Lu and Boutilier(2011)]{lu-bou:c:budgeted-social-choice}
Tyler Lu and Craig Boutilier.
\newblock Budgeted social choice: From consensus to personalized decision making.
\newblock In \emph{Proceedings of IJCAI-2011}, pages 280--286, 2011.

\bibitem[Moenck(1976)]{moenck1976practical}
Robert~T. Moenck.
\newblock Practical fast polynomial multiplication.
\newblock In \emph{Proceedings of {SYMSAC}'76}, pages 136--148, 1976.

\bibitem[Naor et~al.(1995)Naor, Schulman, and Srinivasan]{naor1995splitters}
Moni Naor, Leonard~J. Schulman, and Aravind Srinivasan.
\newblock Splitters and near-optimal derandomization.
\newblock In \emph{Proceedings of FOCS-1995}, pages 182--191, 1995.

\bibitem[Procaccia et~al.(2008)Procaccia, Rosenschein, and Zohar]{pro-ros-zoh:c:proporional-repr-complexity}
Ariel~D. Procaccia, Jeffrey~S. Rosenschein, and Aviv Zohar.
\newblock On the complexity of achieving proportional representation.
\newblock \emph{Social Choice and Welfare}, 30\penalty0 (3):\penalty0 353--362, 2008.

\end{thebibliography}

\clearpage
\appendix

\begin{center}
\noindent
{\LARGE{\bf{}Appendix}}
\end{center}

\section{Few Candidates}

\divisionspace*
\begin{proof}
We show that if there is an optimal solution that does not belong to any space
of any division~$\PSdivision$, then this solution can be transformed to one that
belongs to some space like this.

First, let us consider a division~$\PSdivision = (\PSdivsets, \PSdivfil)$ such
that $\PSdivsets = (S_1, S_2, \ldots, S_\PSdivisionSize)$ and $\PSdivfil = (F_1,
F_2, \ldots, F_\PSdivisionSize)$ for some committee size~$\commsize$, and a time
horizon~$\timelimit$. Further, assume some order~$\rho$. The space of
$\PSdivision$ and~$\rho$ is formed by all combinations $(s_1, s_2, \ldots,
s_\PSdivisionSize, f_1, f_2, \ldots, f_\PSdivisionSize)$ such that $\sum_{i \in
[\PSdivisionSize]} (s_i + f_1) = \timelimit$ and $F_i = \emptyset \Rightarrow
f_i = 0$ for each $i \in [\PSdivisionSize]$. Each such combination forms a
committee series consisting of $s_1$ committees $S_1$ followed by $f_1$
committees $(S_1 \cap S_2) \cup F_1$ and so on up to $S_\PSdivisionSize$. Note
that each time committee $(S_i \cap S_{i+1}) \cup F_i$ would be too small, we
have~$f_i = 0$ by which we disallow such malformed committees.

From now on, let us assume some~$\candquota \leq 4$.  Let us assume that there
is some optimal committee series~$\committeesSeries$ such that there exist two
distinct committees~$S$ and~$S'$ such that~$S \cap S' \neq \emptyset$ and
between them there are at least two distinct committees~$H$ and~$H'$ that
contain candidates different than those in~$S \cup S'$. By consecutiveness
of~$\committeesSeries$, it holds that $(S \cap S') \subset H$ and~$(S \cap S')
\subset H'$ and that none of the candidates in these two committees is used
anywhere else then between (copies of) $S$ and~$S'$. By the frequency
requirement, it holds that candidates $H\setminus(S\cap S')$
and~$H'\setminus(S\cap S')$ are used fewer than~$\candquota$~times. By the same
requirement, we know that the total number of times~$H$ and~$H'$ are used is at
most~$\candquota-2$. Hence, from committees~$H$ and~$H'$ we can select the one,
say~$X$, with a better score (or any of them if the scores are equal), and
substitute the other one with~$X$. As a result, we use candidates~$X \setminus
(S\cap S')$ at most~$\candquota-2$~times in a consecutive subseries of
committees. Naturally, the consecutiveness and frequency of candidates in $S
\cap S'$ remains intact. The quality of the new committee is not worse then the
quality of the original committee. If it is the same, the new committee is also
optimal. It if it is better, we get a contradiction of optimality
of~$\committeesSeries$.

Except for the situation described above, it is clear that all other optimal
committees can be expressed by some order~$\rho$ and some
division~$\PSdivision$.

\end{proof}

\polymultfone*
\begin{proof}
We begin with the observation that for $f=1$, the problem reduces to finding a $k$-sized $t$ disjoint subsets of candidates that meets our score criteria. Towards this, we use the technique  of polynomial multiplication. 
 For this theorem, let the set of candidates be $C=\{1,\ldots,m\}$. Before we discuss our algorithm, let us introduce some relevant definitions. 
The {\em characteristic vector} of a subset $S\subseteq [m]$, denoted by $\chi(S)$, is an $m$-length binary string whose $i^{\textup{th}}$ bit is $1$ if and only if  $i \in S$.  Two binary strings of length $m$ are said to be \emph{disjoint} if for each $i\in [m]$, the $i^\textup{th}$ bits in the two strings are different. The {\em Hamming weight} of a binary string $S$, denoted by $\mathcal{H}(S)$, is the number of $1$s in the string $S$. A monomial $y^i$ is said to have Hamming weight $w$ if the degree term $i$, when represented as a binary string, has Hamming weight $w$. The {\em Hamming projection} of a polynomial $p(y)$ to $h$, denoted by $\mathcal{H}_{h}(p(y))$, is the sum of all the monomials of $p(y)$ which have Hamming weight $h$. %
The {\em representative polynomial} of $p(y)$, denoted by $\mathcal{R}(p(y))$, is a polynomial derived from $p(y)$ by changing the coefficients of all monomials contained in it to $1$. It is important to keep the small coefficients to achieve the desired running time.
 
We begin with the following known result.

\begin{restatable}[\citealp{DBLP:conf/ijcai/Gupta00T21,DBLP:journals/tcs/CyganP10}]{proposition}{HWDisjointness}
Subsets $S_{1}, S_{2} \subseteq W$ are disjoint if and only if Hamming weight of the monomial $x^{\chi(S_1)+\chi(S_2)}$ is $|S_{1}|+|S_{2}|$.
\label{prop:HW-disjointness}
\end{restatable}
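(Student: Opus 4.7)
The plan is to deduce the statement from the classical identity relating binary integer addition to Hamming weight: for any non-negative integers $a$ and $b$, $\mathcal{H}(a+b) = \mathcal{H}(a) + \mathcal{H}(b) - 2c(a,b)$, where $c(a,b)$ counts the carries produced when $a$ and $b$ are added in base $2$. I would establish this identity first by a short bit-by-bit argument on the column addition: at each position the input $1$-bits from $a$, $b$, and an incoming carry are combined into an output bit plus a new carry; a column that produces an outgoing carry replaces two ``local'' $1$-bits by a single $1$-bit injected one position higher, for a net loss of $2$ in the running Hamming-weight count. Iterating over positions gives the claimed identity, and along the way makes precise the notion that $x^{\chi(S_1)+\chi(S_2)}$ has Hamming weight $\mathcal{H}(\chi(S_1) + \chi(S_2))$.

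With the identity in hand, the forward direction is immediate. If $S_1 \cap S_2 = \emptyset$, no index $i$ has a $1$ simultaneously in $\chi(S_1)$ and $\chi(S_2)$, so the column sums never exceed $1$ and no carry is ever generated; hence $c(\chi(S_1),\chi(S_2)) = 0$ and $\mathcal{H}(\chi(S_1) + \chi(S_2)) = \mathcal{H}(\chi(S_1)) + \mathcal{H}(\chi(S_2)) = |S_1| + |S_2|$. For the converse, I would argue by contrapositive: if $S_1 \cap S_2 \neq \emptyset$, pick the lowest index $i$ with $\chi(S_1)_i = \chi(S_2)_i = 1$; at position $i$ the column sum is at least $2$, forcing at least one carry, so $c(\chi(S_1),\chi(S_2)) \geq 1$. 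The identity then gives $\mathcal{H}(\chi(S_1) + \chi(S_2)) \leq |S_1| + |S_2| - 2 < |S_1| + |S_2|$.

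The only subtlety---and the step I would spend the most care on---is ensuring that a chain of carries triggered by a colliding bit cannot somehow ``refund'' itself and restore the Hamming weight. To rule this out, I would observe that carries only ever move $1$-bits upward and never increase the total count, so any instance of a collision contributes a strictly positive amount to $c$; the carry count over the whole addition is at least the number of colliding positions. With that counting in place, the two implications above combine to yield the proposition exactly as stated.
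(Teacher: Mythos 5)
The paper does not actually prove this proposition: it imports it as a known fact with citations to Gupta et al.\ and Cygan--Pilipczuk, so there is no in-paper argument to compare against. Your carry-counting route is the standard way to establish it, and your overall structure (disjointness $\Leftrightarrow$ no carries, plus ``each carry strictly decreases the weight'') does yield the statement.

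There is, however, an arithmetic slip in your key identity. The correct relation is $\mathcal{H}(a+b) = \mathcal{H}(a) + \mathcal{H}(b) - c(a,b)$, not $-\,2c(a,b)$. In a column whose inputs sum to $2$, two $1$-bits are replaced by a single $1$-bit one position higher (net loss $1$); in a column whose inputs sum to $3$, three $1$-bits become two (again net loss $1$). Summing the column relation $a_i + b_i + c_i = s_i + 2c_{i+1}$ over all positions gives $\mathcal{H}(a)+\mathcal{H}(b)+C = \mathcal{H}(a+b)+2C$ where $C$ is the total number of carries, hence a deficit of exactly $C$, not $2C$. A sanity check: $1+1=2$ has $\mathcal{H}=1$ with one carry, and $1+1-1=1$, whereas your version would predict $0$. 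The good news is that the proposition only needs the qualitative consequences --- if $S_1\cap S_2=\emptyset$ then no column ever sums to more than $1$, so $C=0$ and the weights add exactly; if some index lies in both sets then $C\geq 1$ and the weight drops strictly below $|S_1|+|S_2|$ --- so your conclusion survives once the constant is corrected (your displayed bound $\leq |S_1|+|S_2|-2$ should read $\leq |S_1|+|S_2|-1$). Your closing worry about carry chains ``refunding'' weight is settled exactly by the identity: the deficit equals the carry count, which is monotone in the number of colliding columns.
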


In the algorithm, we compute a polynomial $p_{s,j}^\ell$, where $s\in [m], j\in [\timelimit]$, $\ell \in [\eta]$, such that a non-zero polynomial corresponds to the fact that there is a sequence $\Co{S}$ of  $j$ disjoint committees, each of size $k$, such that their union has $s$ candidates and $\utilOp(\beta(\Co{S}))\geq \ell$. We compute these polynomials recursively as follows. 

For $j=1$, $\ell\in [\eta]$ 
\begin{equation}\label{eq1:poly-mult}
p_{k,1}^\ell(y)=\sum_{\substack{Y\subseteq C, \\ |Y|=k, \beta(Y)\geq \ell}} y^{\chi(Y)} 
\end{equation}

For $j>1$ and $s\in [m]$, where $s$ is a multiple of $k$, $\ell \in [\eta]$ we compute as follows. 

\begin{equation}\label{eq2:poly-mult}
p_{s,j}^\ell(y)=\Co{R}\Big(\Co{H}_s\Big(\sum_{\substack{Y\subseteq C, |Y|=k, \\ \beta(Y)=\ell'(<\ell)}}y^{\chi(Y)}\cdot p_{(s-k),(j-1)}^{\ell-\ell'}(y)\Big)\Big)
\end{equation}

Towards the correctness, we argue that if the polynomial $p_{m,t}^\eta(y)$ is
non-zero, we obtained a desired sequence of committees, otherwise it is a \no-instance. In particular, we prove the following the result.

\begin{restatable}{lemma}{lem:polymult-correctness}
    The polynomial $p_{m,t}^\eta(y)$ is non-zero if and only if there exists a sequence $\Co{S}$ of $t$ disjoint committees, each of size $k$, such that their union has $s$ candidates and $\utilOp(\beta(\Co{S}))\geq \eta$.
\label{lem:polymult-correctness}\end{restatable}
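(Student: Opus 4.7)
The plan is to prove the lemma by induction on the number of committees $j$, establishing the more general statement that $p_{s,j}^{\ell}(y)$ is non-zero if and only if there exist $j$ pairwise disjoint size-$k$ committees whose union has exactly $s$ candidates and whose cumulative $\beta$-score is at least $\ell$. Specializing to $s = m$, $j = t$, $\ell = \eta$ then recovers the stated lemma.

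The base case $j = 1$ follows directly from the definition in~\eqref{eq1:poly-mult}: $p_{k,1}^{\ell}(y)$ contains the monomial $y^{\chi(Y)}$ for each size-$k$ committee $Y$ with $\beta(Y) \geq \ell$, so it is non-zero precisely when such a committee exists, and any such witness is itself a one-committee ``sequence'' of the required form.

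For the inductive step I would use the recurrence~\eqref{eq2:poly-mult} together with Proposition~\ref{prop:HW-disjointness} as the central tool. In the forward direction, non-zeroness of $p_{s,j}^{\ell}(y)$ means that some product $y^{\chi(Y)} \cdot p_{s-k,\,j-1}^{\ell - \ell'}(y)$, with $|Y|=k$ and $\beta(Y) = \ell'$, contributes a monomial of Hamming weight exactly $s$ that survives $\Co{H}_s$; the inductive hypothesis then supplies $j-1$ pairwise disjoint size-$k$ committees with union of $s-k$ candidates and total score at least $\ell - \ell'$. Since the surviving monomial has Hamming weight $s = k + (s - k) = |Y| + |S_1 \cup \cdots \cup S_{j-1}|$, Proposition~\ref{prop:HW-disjointness} forces $Y$ to be disjoint from their union, producing the desired $j$-committee sequence. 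In the reverse direction, given such a sequence, I would designate one of its committees as the ``$j$-th'' committee $Y$ with $\beta(Y) = \ell'$, invoke the induction on the remaining $j-1$, and check that the corresponding product term contributes a monomial of Hamming weight exactly $s$ (again by Proposition~\ref{prop:HW-disjointness}, using disjointness of $Y$ from the rest), so it survives $\Co{H}_s$. The final $\Co{R}$ operation only rewrites non-zero coefficients to $1$, and thus preserves non-zeroness.

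The main technical delicacy I anticipate concerns the score bookkeeping in the recurrence: the constraint ``$\beta(Y)=\ell'(<\ell)$'' as written appears to restrict $\ell'$ to be strictly below $\ell$, which would be problematic in the reverse direction when a single committee already achieves $\beta(Y) \geq \ell$ on its own. I would handle this by interpreting the recurrence as allowing $\ell' \leq \ell$ with the residual sub-polynomial $p_{s-k,\,j-1}^{\max(0,\,\ell - \ell')}$ capturing the ``no further score required'' case, or equivalently by capping the witnessed score of the last committee at $\ell$. Modulo this bookkeeping, and a brief check that only degree terms of $y$ whose exponent is a genuine $\{0,1\}$-vector (so that $\chi^{-1}$ of the exponent is well-defined as a set) can appear throughout the recursion, the induction is routine.
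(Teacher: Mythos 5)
Your proof takes essentially the same route as the paper's: the paper also argues by induction on $j$ (splitting the equivalence into two separate claims, one per direction, each proved inductively), with \Cref{prop:HW-disjointness} and the Hamming projection $\Co{H}_s$ certifying disjointness of the newly added committee from the union of the previous ones. Your remark about the strict inequality $\ell' < \ell$ in the recurrence is a fair catch of a bookkeeping issue that the paper's own proof glosses over, and your proposed repair (allowing $\ell' \leq \ell$ and capping the residual requirement at zero) is sound.
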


\begin{proof}
    In the forward direction, suppose that $p_{m,t}^\eta(y)$ is a non-zero polynomial.  
    
    \begin{restatable}{claim}{clm:polymult-correctness-fwd}
    For  $j\in [\timelimit], s\in [m]$, where $s$ is a multiple of $k$, and $\ell \in [\eta]$, if  $p_{s,j}^\eta(\ell)$ is non-zero, then there exists a sequence $\Co{S}$ of $j$ disjoint committees, each of size $k$, such that their union has $s$ candidates and  $\utilOp(\beta(\Co{S}))\geq \ell$.        
    \label{clm:polymult-correctness-fwd}\end{restatable}
    
    \begin{proof}
        We prove it by induction on $j$.

        \noindent \emph{Base Case:} For $j=1$, we compute a polynomial for $s=k$. If $p_{k,1}^\ell(y)$ is non-zero, then due to construction of \Cref{eq1:poly-mult}, the claim is true. 

\noindent \emph{Inductive Step:} Suppose that that the clain is true for $j\leq i-1$. We prove it for $j=i$. Suppose that $p_{s,i}^\ell(y)$ is non-zero. Then, consider a monomial $y^{\chi(Y)}\cdot p_{(s-k),(j-1)}^{\ell-\ell'}$, where $\ell'=\beta(Y)$. Due to the induction hypothesis, there exists a sequence $\Co{S}$ of $j-1$ disjoint committees, each of size $k$, such that their union has $s-k$ candidates and  $\utilOp(\beta(S))\geq \ell-\ell'$. Let $Y'$ be the set of candidates that belong to a set in $\Co{S}$. Clearly, $|Y'|=s-k$. Since $y^{\chi(Y)}\cdot p_{(s-k),(j-1)}^{\ell-\ell'}$ is a monomial in $y_{s,j}^\ell$, due to \Cref{prop:HW-disjointness}, $Y$ and $Y'$ are disjoint. Thus, $\Co{S}'=\Co{S}\cup \{Y\}$ is a sequence of $j$ disjoint committees, each of size $k$, such that their union has $s$ candidates and the score is $\utilOp(\beta(\Co{S'}))+\beta(Y)\geq \ell$. 
    \end{proof}
    Due to \Cref{clm:polymult-correctness-fwd}, the proof of forward direction follows. For the reverse direction, we prove the following claim.

\begin{restatable}{claim}{clm:polymult-correctness-rev}
    For  $j\in [\timelimit], s\in [m]$, where $s$ is a multiple of $k$, and $\ell \in [\eta]$, if there exists a sequence $\Co{S}=\{S_1\uplus \ldots \uplus S_j\}$ of $j$ disjoint committees, each of size $k$, such that their union has $s$ candidates and  $\utilOp(\beta(\Co{S}))\geq \ell$, then the polynomial  $p_{s,j}^\eta(\ell)$ has a monomial $y^{\chi(S_1\uplus \ldots \uplus S_j)}$.        
    \label{clm:polymult-correctness-rev}\end{restatable}

    \begin{proof}
        We prove by induction on $j$.
        
     \noindent \emph{Base Case:} For $j=1$, suppose that there exists a committee $S\subseteq C$ of size $k$ such that $\beta(S)\geq \ell$. Then, due to \Cref{eq1:poly-mult}, the polynomial $p_{k,1}^\ell$ is non-zero. 

     \noindent \emph{Induction Step:} Suppose that that claim is true for $j=i-1$. We next prove it for $j=i$. Suppose that there exists a sequence  $\Co{S}=\{S_1,S_2,\ldots,S_{i}\}$ of $i$ disjoint committees, each of size $k$, such that their union has $s$ candidates and  $\utilOp(\beta(\Co{S}))\geq \ell$. Then, $\Co{S}'=\{S_1,\ldots,S_{i-1}\}$ is a sequence of $i-1$ disjoint committees, each of size $k$, such that their union has $s-k$ candidates and  $\utilOp(\beta(\Co{S}))\geq \ell - \beta(S_i)$. Thus, due to induction hypothesis, $p_{(s-k),(i-1)}^{\ell - \beta(S_i)}$ has a monomial $y^{\chi(S_1\uplus \ldots \uplus S_{i-1})}$. Since $S_1\uplus \ldots \uplus S_{i-1}$ is disjoint from $S_i$, the polynomial $y_{s,i}^\ell$ contains the monomial $y^{\chi(S_1\uplus \ldots \uplus S_{i})}$.
    \end{proof}
    This completes the proof.
\end{proof}
To argue the running time, we use the following known result.

\begin{restatable}{proposition}{prop:polynomial-multiplication}[\cite{moenck1976practical}]
    There exists an algorithm that multiplies two polynomials of degree $d$ in $\OO(d \log d)$ time.
\label{prop:polynomial-multiplication}\end{restatable}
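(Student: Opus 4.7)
The plan is to establish this proposition via the Fast Fourier Transform (FFT), which is the standard technique for fast polynomial multiplication. Given two polynomials $A(x)$ and $B(x)$ of degree at most $d$, their product $C(x) = A(x) B(x)$ has degree at most $2d$ and is therefore uniquely determined by its values at any $2d+1$ distinct points. The key insight is that evaluating a polynomial at a carefully chosen set of points can be done much faster than the naive $\OO(d^2)$ schoolbook multiplication, provided the evaluation points are roots of unity that admit a recursive structure.

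First, I would choose $n$ to be the smallest power of two that is at least $2d+1$, so that $n = \OO(d)$. I would then evaluate both $A$ and $B$ at the $n$-th roots of unity $\omega^0, \omega^1, \ldots, \omega^{n-1}$, where $\omega$ is a primitive $n$-th root of unity. The FFT performs each such evaluation in $\OO(n \log n) = \OO(d \log d)$ time by exploiting the even-odd decomposition $A(x) = A_{\text{even}}(x^2) + x \cdot A_{\text{odd}}(x^2)$, where $A_{\text{even}}$ and $A_{\text{odd}}$ each have half as many coefficients. Since $\omega^{j}$ and $\omega^{j+n/2}$ differ only by sign when squared, the two halves of the output share almost all their work, which gives the classical recurrence $T(n) = 2T(n/2) + \OO(n)$, solving to $\OO(n \log n)$.

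Next, since $C(\omega^j) = A(\omega^j) \cdot B(\omega^j)$ for each $j$, I would compute the point-value representation of $C$ by $n$ pointwise multiplications in $\OO(d)$ time. Finally, I would apply the inverse FFT (structurally identical, with $\omega^{-1}$ in place of $\omega$ and a global $1/n$ normalization) to recover the coefficients of $C$ in another $\OO(d \log d)$ time. Summing the three phases yields the claimed $\OO(d \log d)$ bound.

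The main obstacle is the choice of algebraic setting in which to carry out the FFT, since one needs an $n$-th root of unity and exact arithmetic. Over $\mathbb{C}$ one must manage floating-point precision carefully; for the application in this paper, where coefficients are nonnegative integers bounded polynomially in the input size, I would instead work in $\mathbb{Z}/p\mathbb{Z}$ for a prime $p$ of the form $c \cdot 2^{\ell}+1$ with $2^{\ell} \geq n$ and $p$ large enough that the true integer coefficients of $C$ fit (or use a CRT combination of a few such primes). This number-theoretic transform variant preserves the $\OO(d \log d)$ running time while avoiding precision issues. Since the result is a classical one cited from the literature, I would defer to the standard treatment rather than re-derive the analysis in detail.
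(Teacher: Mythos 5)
The paper does not prove this proposition at all---it is imported verbatim as a known result from the cited reference of Moenck, so there is no in-paper argument to compare against. Your FFT-based derivation (evaluate at roots of unity, multiply pointwise, invert, with the even--odd recurrence giving $T(n)=2T(n/2)+\OO(n)$) is the standard and correct proof of the claimed $\OO(d\log d)$ arithmetic bound, and your remark about using a number-theoretic transform modulo a Fourier prime is the right way to keep the arithmetic exact for the integer-coefficient polynomials that actually arise in the paper's application. Nothing is missing; you have simply supplied the textbook proof that the authors chose to cite rather than reproduce.
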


Note that the degree of a polynomial in our algorithm is at most $2^m$ (when the
exponent has all 1s). Furthermore, every polynomial has at most $2^m\times \eta$
terms, where $\eta=(mn)^{\OO(1)}$

\paragraph{The Egalitarian Case.}
 Here, we compute a polynomial $p_{s,j}$, where $s\in [m], j\in [\timelimit]$, such that a non-zero polynomial corresponds to the fact that there is a sequence $\Co{S}$ of  $j$ disjoint committees, each of size $k$, such that their union has $s$ candidates and $\egal(\beta(\Co{S}))\geq \ell$. We compute these polynomials recursively as follows. 

For $j=1$

\begin{equation}\label{eq1:poly-mult-egal}
p_{k,1}(y)=\sum_{\substack{Y\subseteq C, \\ |Y|=k, \beta(Y)\geq \eta}} y^{\chi(Y)} 
\end{equation}

For $j>1$ and $s\in [m]$, where $s$ is a multiple of $k$,  we compute as follows. 

\begin{equation}\label{eq2:poly-mult-egal}
p_{s,j}(y)=\Co{R}\Big(\Co{H}_s\Big(\sum_{\substack{Y\subseteq C, |Y|=k, \\ \beta(Y)\geq \eta}}y^{\chi(Y)}\cdot p_{(s-k),(j-1)}(y)\Big)\Big)
\end{equation}
The proof of correctness is the same as for the utilitarian case.
\end{proof}

\utilfptmf*
\begin{proof}
	We generalize the approach from~\Cref{thm:pol-util-ftwo} by extending the
    function~$F$ with more parameters. Additional parameters allow us to keep
    track of the number of consecutive committees that candidates serve in.
    
    Specifically, we define a binary function~$F(A, j, s, G_1, \ldots, G_{f-1})$
    which has a similar meaning to that from the proof
    of~\Cref{thm:pol-util-ftwo}. Namely, it returns \true{}
    if among candidates~$A$, there is a
	series of quality at least~$s$ consisting of~$j$ committees,
    each of size $2$ such that
	the~$j$-th committee contains candidates
    from~$G_{x} \subseteq A$ in the $x$-th committee in the series; 
    in the opposite case, the function returns~\false{}.
    Our algorithm checks whether $F$ returns \true{} for at least one of~$F(C, \tau, \vallowerbound, G'_1, \ldots, G'_{f-1})$,
	for~$G'_x \subseteq C$, $x \in [\candquota-1]$, $|\bigcup_{x \in
    [\candquota-1]}G'_x| \leq k$.
    The algorithm returns \true{} if and only if the test succeeds.

	The algorithm computes the values of~$F$ applying the dynamic
    programming approach
	to the following recursive formula. We denote by~$\mathcal{W}(A, G_1,
	\ldots, G_{f-1}, k)$ a collection of all size-$\commsize$ sets~$W$ such that
	$W \subseteq A$ and $G_x \subseteq A$ for all~$x \in [\candquota-1]$
    \small
	\begin{align*}
	   &f(A, j, s, G_1, \ldots, G_{f-1})=\\
	   &\bigvee_{W \in \mathcal{W}(A, G_1,
	   \ldots, G_{f-1}, k)} f(A \setminus G_1, j-1, s-\score(W),\\
      &G_2, \ldots,
		G_{f-1},  W \setminus \bigcup_{x \in [f-1]} G_x).
     \end{align*}
     \normalsize
     The correctness is based on the very same reasoning as for the argument
     for the proof of~\Cref{thm:pol-util-ftwo}. For the running time,
     compared to that from~\Cref{thm:pol-util-ftwo}, here we
     only multiply the number of values of the function and the time required
     to compute a single value by additional terms~$4^\candquota$ as exactly
     $\candquota-2$~times, where the subtraction of two comes from the fact that
     the previous proof was given for $\candquota=2$.
\end{proof}

\section{Short Time Horizon}

\utilfpttfone*
\begin{proof}
  The result is due to a polynomial-time reduction from \pname{\utilOp}{\beta}
  with $f=1$ to \wsetpacking{}. In the latter, given a universe $U$, a family
  $\mathcal{F}$ of $d$-sized subsets of $U$, a weight function $w\colon
  \mathcal{F} \rightarrow \mathbb{R}_{\geq 0}$, and two integers $p,q$; the goal
  is to choose a set of $p$ disjoint sets of $\mathcal{F}$ whose total weight is
  at least $q$. \wsetpacking{} can be solved in
  $\OO^\star(2.851^{(d-0.5501)p})$ time~\citep{DBLP:journals/siamdm/GoyalMPZ15}.
  Given an instance $\Co{I}=(E=(C,V),k,f,\timelimit,\eta)$ of
  \pname{\utilOp}{\beta}, where $k$ is a constant and $f=1$, we construct an
  instance $\Co{J}=(U,\mathcal{F},w,p,q)$ as follows: $U=C$,
  $\mathcal{F}=\{S\subseteq C \colon |S|=k\}$, and the weight of every set $S\in
  \mathcal{F}$ is $\beta(S)$. Furthermore, $p=\timelimit,q=\eta.$ Next, we prove
  the correctness. 
\begin{lemma}
  For $\candquota = 1$, $\Co{I}$ is a \yes-instance of \pname{\utilOp}{\beta} if
  and only if $\Co{J}$ is a \yes-instance of \wsetpacking{}.
\end{lemma}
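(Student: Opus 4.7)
The plan is to prove the two directions of the equivalence by using the fact that $\candquota = 1$ forces the committees of any feasible committee series to be pairwise disjoint, which exactly matches the disjointness requirement of \wsetpacking{}.

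First I would observe that when $\candquota = 1$, each candidate can belong to at most one consecutive committee of the series, which means that if a candidate appears in committee $W_i$, it cannot appear in any $W_j$ with $j \neq i$. Hence every consecutive $\candquota$-frequency series with $\candquota = 1$ consists of pairwise disjoint size-$\commsize$ committees, and conversely any ordered collection of pairwise disjoint size-$\commsize$ committees trivially respects the $\candquota = 1$ constraint (each candidate forms a single block of length one).

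For the forward direction, given a \yes-instance $\Co{I}$ with witness series $\committeesSeries = (W_1, \ldots, W_{\timelimit})$, I would verify that $\{W_1, \ldots, W_{\timelimit}\}$ is a collection of $p = \timelimit$ pairwise disjoint size-$\commsize$ subsets of $C = U$ (by the observation above), each belonging to $\mathcal{F}$, whose total weight $\sum_{i \in [\timelimit]} \beta(W_i) = \util(\beta(\committeesSeries)) \geq \eta = q$. For the reverse direction, given a \yes-instance $\Co{J}$ with $p = \timelimit$ pairwise disjoint sets $S_1, \ldots, S_{\timelimit} \in \mathcal{F}$ of total weight at least $q = \eta$, I would fix any ordering and let $W_i := S_i$; the resulting series is a consecutive $1$-frequency committee series of size $\timelimit$ with $\util(\beta(\committeesSeries)) \geq \eta$.

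The argument is almost entirely bookkeeping, so I do not anticipate a real obstacle; the only subtle point to articulate carefully is the first observation, namely that the $\candquota = 1$ frequency constraint coincides exactly with pairwise disjointness of all committees in the series (not merely of neighboring ones), which is what lets the reduction be one-to-one rather than just one direction being true.
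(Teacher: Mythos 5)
Your proposal is correct and follows essentially the same route as the paper's proof: both directions rest on the observation that the $\candquota=1$ frequency constraint is equivalent to pairwise disjointness of all committees in the series, after which the correspondence between solutions is direct bookkeeping. The paper's own argument is exactly this, so no further comparison is needed.
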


\begin{proof}
 Let $\Co{S}=(S_1,\ldots,S_{\timelimit})$ be a solution to $\Co{I}$. We claim
 that $\{S_1,\ldots,S_{\timelimit}\}$ is also a solution to $\Co{J}$.  Since
 $f=1$, for every $i,j\in [\timelimit]$, $S_i$ and $S_j$ are disjoint.
 Furthermore, size of each $S_i$ is $k$, thus for every $i\in [\timelimit]$,
 $S_i\in \Co{F}$.  Since $\utilOp(\beta(S))\geq \eta$,
 $\sum_{i\in[\timelimit]}w(S_i)\geq \eta$. Hence,
 $\{S_1,\ldots,S_{\timelimit}\}$ is a solution to $\Co{J}$. 

 In the reverse direction, let $\{S_1,\ldots,S_{\timelimit}\}$ be a solution to
 $\Co{J}$. Then, we claim that $\Co{S}=(S_1,\ldots,S_{\timelimit})$ is a
 solution to $\Co{I}$. Since for every $i,j\in [\timelimit]$, $S_i$ and $S_j$
 are disjoint, a candidate appears in only one committee. Since size of each set
 in $\Co{F}$ is $k$, each $S_i$ is of size $k$. Furthermore, due to the
 definition of weight functions, $\utilOp(\beta(S))\geq \eta$. 
\end{proof}
This completes the proof.
\end{proof}

\randegalfptt*

\begin{proof}
Consider an election~$E=(C,V)$ and an instance $\Co{I}=(E,k,f,\timelimit,\eta)$
    of \pname{\utilOp}{\beta}, where $k$ and $f$ are constant.
    We define a family $\Co{F}=\{S\subseteq C \colon |S|=k\}$ of all
    subsets of candidates of size $k$. Note that the goal is to choose $t$~sets
    from~$\Co{F}$ that represent committees and that lead to our solution (if it exists),
    i.e.,\ the selected committees overall achieve score $\eta$, and they can
    be ordered such that every candidate appears in at most $f$~consecutive
    ones.
    
    Our algorithm runs in three phases. In the first phase, we ``highlight'' the sets
    that could be part of our solution by coloring the candidates uniformly and
    independently at random. In the second phase, we discard the sets in $\Co{F}$
    that cannot be part of our solution based on our coloring. In the third phase,
    we find a ``colorful'' solution using dynamic programming, if it exists. 
    A solution $\Co{S}=(S_1,\ldots,S_{\timelimit})$ is called \emph{\colorful} if
    every pair of candidates $s,s'$ in $\uplus_{i\in [\timelimit]}S_i$ has distinct
    colors. Next, we describe our algorithm formally. If $\Co{I}$ is a \yes-instance,
    then let $\Co{S}=\{S_1,\ldots,S_{\timelimit}\}$ be a solution to $\Co{I}$. 

    \noindent{{\bf Phase I [Separation of candidates]}.}
     In this step, we  color the candidates uniformly and independently at random using $kt$ colors.
     The goal of this step is that ``with high probability'', we color the candidates in any of the set in $\Co{S}$ using distinct colors. Note that the number of candidates in $\uplus_{i=1}^t S_i$ is at most $kt$, where $k$ is the size of every committee and $t$ is the number of committees. Let $\chi \colon C \rightarrow [k\timelimit]$ be a \emph{coloring function} that colors each candidate in $C$ uniformly and independently at random.   
Due to \Cref{prop:success-prob}, the function $\chi$ colors the candidates in $\uplus_{i=1}^t S_i$ using distinct colors with probability at least $e^{-k\timelimit}$.  

\begin{proposition}{\rm \cite[Lemma 5.4]{ParamAlgorithms15b}}\label{prop:success-prob}
  Let $U$ be a universe and $X\subseteq U$. Let $\chi \colon U \rightarrow
  [|X|]$ be a function coloring each item of $U$ one of $|X|$ colors uniformly
  and independently at random. Then, the probability that the elements of $X$
  are colored with pairwise distinct colors is at least~$e^{-|X|}$.
\end{proposition}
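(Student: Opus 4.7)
The plan is to reduce the event to a simple counting ratio and then invoke a one-line analytic bound. Write $x \coloneqq |X|$. First I would observe that the color assignments $\chi(u)$ for $u \in U \setminus X$ are independent of the event ``elements of $X$ receive pairwise distinct colors,'' so we may restrict attention to the restriction of $\chi$ to $X$, which is a uniformly random function from $X$ to $[x]$. The probability in question then equals the number of injective maps $X \to [x]$ divided by the total number of maps $X \to [x]$. Since $|X| = x$, the number of injective maps is exactly $x!$ (bijections between two sets of size $x$), and the total number of maps is $x^x$, giving probability $x!/x^x$.

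The second step is to prove the inequality $x!/x^x \geq e^{-x}$. I would derive this directly from the Taylor series $e^x = \sum_{n=0}^{\infty} x^n/n!$: since $x \geq 0$ makes every term nonnegative, the single term with $n = x$ already yields $e^x \geq x^x/x!$, and rearranging gives $x!/x^x \geq e^{-x}$ as required. The edge case $x = 0$ is vacuous: the event holds with probability $1$, and the bound reads $1 \geq e^0$.

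The main ``obstacle'' is purely cosmetic, namely choosing between equivalent bounds of the form $x! \geq (x/e)^x$. One could invoke Stirling's inequality or integrate $\ln t$ to deduce $\ln(x!) \geq x\ln x - x$, but the Taylor-series route has the merit of being entirely self-contained and fitting in a single line. No combinatorial subtlety appears anywhere, because the items of $U \setminus X$ are simply marginalized away and the event on $X$ reduces to the classical ``birthday''-style count.
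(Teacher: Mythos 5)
Your proof is correct. The paper does not prove this proposition at all---it imports it verbatim as Lemma~5.4 of the cited \emph{Parameterized Algorithms} textbook---so there is no in-paper argument to compare against; your derivation (marginalize out $U \setminus X$, compute the injectivity probability $|X|!/|X|^{|X|}$, and bound it below by $e^{-|X|}$ via the single term $n=|X|$ of the Taylor series of $e^{|X|}$) is precisely the standard one-paragraph proof from that source, and every step checks out, including the $|X|=0$ edge case.
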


    \noindent{\bf{Phase II [Filtration]}.} In the second phase of the algorithm, we delete a set from $\Co{F}$ that cannnot be part of the \colorful solution. In particular, we delete a set $S$ from $\Co{F}$ if for any pair of candidates $c,c'$ in $S$, $\chi(c)=\chi(c')$. 
    
    \noindent{\bf{Phase III [Dynamic Programming]}.} 

Given a coloring function $\chi$ and the family $\Co{F}$ obtained after filtration, we find a \colorful solution with respect to $\chi$ as follows. %
For $X\subseteq [k\timelimit], \ell \in [\eta]$, $T[1,X,\ell]=1$ if there exists
a set $S$ in $\Co{F}$ such that $\chi(S)=X$ and $\beta(X)\geq \ell$; otherwise
$0$.  For $i\in [\timelimit]$, where $i>f$, $X_1 \subseteq
[k\timelimit],\ldots,X_{f+1} \subseteq [k\timelimit]$, and $\ell_1\in
[\eta],\ldots,\ell_{f+1} \in [\eta]$,  we compute
$T[i,X_1,\ldots,X_{f+1},\ell_1,\ldots,\ell_{f+1}]$, where $X_1\cap X_{f+1} =
\emptyset$,  which has the following information:
$T[i,X_1,\ldots,X_{f+1},\ell_1,\ldots,\ell_{f+1}]=1$, if there exist $i$~sets in
$\Co{F}$ that can be ordered such that union of colors of first $i-f$ sets is in
$X_1$, and the score is $\ell_1$,   $(i-f+j)$th, $1\leq j \leq f$,    set has
color in $X_{j+1}$ and score $\ell_{j+1}$, and   a color appears at most $f$
times consecutively; otherwise $0$.

For $i=1, X\subseteq [k\timelimit], \ell\in [\eta]$, we compute $T[i,X,\ell]$
 as follows.
\begin{equation}\label{eq:paramt=basecase}
\begin{split}
  T[1,X,\ell]= & \begin{cases} 1 \text{ if there is a \colorful{} } S\in
    \Co{F} \text{ such}\\
    \;
     \text{that it has colors in } X,
      \text{and } \beta(S)=\ell \\
     0 \text{ otherwise}
     \end{cases} 
\end{split}
\end{equation}

For $i\in [\timelimit]$, where $i>f$, $X_1 \subseteq [k\timelimit],\ldots,X_{f+1} \subseteq [k\timelimit]$, such that $X_1\cap X_f = \emptyset$, and $\ell_1\in [\eta],\ldots,\ell_{f+1} \in [\eta]$, we compute as follows. When $i=f+1$, 
\begin{equation}\label{eq:paramt-f+1}
   \begin{split}
T[i,X_1,\ldots,X_{f+1},\ell_1,\ldots,\ell_{f+1}]   = & \underset{j\in [i]}{\bigwedge} T[1,X_j,\ell_j]
   \end{split}
\end{equation}
When $i>f+1$, 
\begin{equation}\label{eq:paramt-largei}
   \begin{split}
T[i,X_1,\ldots,X_{f+1},\ell_1,\ldots,\ell_{f+1}]   = & \\  \underset{\substack{X' \subseteq X_1, \\ \ell'\in [\ell_1] }}{\bigvee} \big(T[i-1,X_1\setminus X_{f}, X',  X_2,\ldots,X_{f}, \\ \ell_1-\ell',\ell',\ell_2,\ldots,\ell_{f}]\wedge T[1,X_{f+1},\ell_{f+1}]\big)
   \end{split}
\end{equation}

We return \yes{} if $T[t,X_1,\ldots,X_{f+1},\ell_1,\ldots,\ell_{f+1}]=1$ for
some $X_1 \subseteq [k\timelimit],\ldots,X_{f+1} \subseteq [k\timelimit]$, such
that $X_1\cap X_{f+1} = \emptyset$, and $\ell_1\in [\eta],\ldots,\ell_{f+1} \in
[\eta]$, where $\ell_1+\ldots+\ell_f\geq \eta$; otherwise we report \failure{}. 

Next, we prove the correctness of the algorithm. We begin with the following result.

    \begin{restatable}{lemma}{lem:prob-colorful}
       Given an instance $\Co{I}$ of \pname{\utilOp}{\beta} and a coloring
       function $\chi$, the above dynamic programming algorithm returns \yes{} iff there exists a \colorful solution to $\Co{I}$. 
    \label{lem:prob-colorful}\end{restatable}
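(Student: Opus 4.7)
The plan is to prove the biconditional by induction on the first DP parameter~$i$, using as inductive hypothesis that $T[i,X_1,\ldots,X_{f+1},\ell_1,\ldots,\ell_{f+1}] = 1$ exactly when there exists an ordered sequence of $i$ sets from the Phase~II-filtered family~$\Co{F}$ such that (a)~the color union of its first $i-f$ sets is exactly~$X_1$ with total score~$\ell_1$, (b)~for each $j \in [f]$ the $(i-f+j)$-th set has color set $X_{j+1}$ and score $\ell_{j+1}$, and (c)~every color occupies at most $f$ consecutive positions. The state-validity condition $X_1 \cap X_{f+1} = \emptyset$ is baked into the set of states the DP evaluates, while the requirement that a single set has $k$ pairwise distinct colors is guaranteed by the filtration. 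Once the hypothesis is in place, the final \yes{}/\failure{} verdict follows by aggregating over states with $i = \tau$ and $\ell_1 + \cdots + \ell_f \geq \eta$, which is exactly the existence of a \colorful{} solution.

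For the base case~$i=1$, the equivalence is immediate from~\Cref{eq:paramt=basecase}: the filtered family contains only sets with $k$ distinct colors, and we simply test for a set with the requested color fingerprint and score. For~$i=f+1$, where no retired prefix yet exists, \Cref{eq:paramt-f+1} requires each of the $f+1$ base-case queries to succeed independently; in the forward direction, collecting the $f+1$ realizing sets in order yields a valid sequence (the frequency bound is trivial because the whole sequence fits inside one window of length~$f+1$, and each color in $X_1$ must match some color of the very first set which lies within the window), and the backward direction reverses this reading.

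For the inductive step $i > f+1$, I would verify both implications of~\Cref{eq:paramt-largei}. Forward: if the disjunction evaluates to~$1$, the existential witnesses $X', \ell'$ together with the inductive hypothesis yield a valid sequence of $i-1$ sets realizing the shifted state; appending the Phase-II set realizing $T[1,X_{f+1},\ell_{f+1}]$ produces a sequence of $i$ sets whose sliding-window invariants match the state on the left-hand side, because the $(i-f)$-th committee transitions from the window slot tracked by $X'$ into the retired prefix (so the new $X_1$ equals the old $(X_1 \setminus X') \cup X' = X_1$). Backward: given a witnessing sequence of $i$ committees, take $X'$ and $\ell'$ to be the colors and score of its $(i-f)$-th committee, peel off the $i$-th committee to invoke the inductive hypothesis on the prefix, and use the base case for the singleton query.

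The chief obstacle will be the careful bookkeeping at the boundary of the sliding window: I must argue that the committee leaving the window (moving from $X_2'$ into $X_1$) does not smuggle a color into the prefix that is also used by the newest committee~$X_{f+1}$, since this is the only place the consecutive-frequency constraint could silently fail. The argument is that, by the inductive hypothesis, the earlier state already satisfied $X_1' \cap X_{f+1}' = \emptyset$ and the new state is only valid when $X_1 \cap X_{f+1} = \emptyset$; any color shared between the window slots corresponds to the same candidate (colorfulness is global), and consecutive appearances within the window remain within the allowed length~$f$ because no color ever reenters the window after exiting. Once this invariant is maintained at each transition, summing over all valid terminal states whose window scores sum to at least~$\eta$ gives the desired characterization, completing the proof of the lemma.
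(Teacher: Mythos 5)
Your proposal follows essentially the same route as the paper's proof: induction on the first DP index $i$ with the identical state semantics, base cases read off from \Cref{eq:paramt=basecase} and \Cref{eq:paramt-f+1}, and both directions of \Cref{eq:paramt-largei} verified by peeling off the last committee and sliding the window, with the frequency constraint enforced through the $X_1 \cap X_{f+1} = \emptyset$ invariant plus colorfulness. This matches the paper's argument in structure and in all key steps, so no further comparison is needed.
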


\begin{proof}
We argue that \Cref{eq:paramt-f+1} and \Cref{eq:paramt-largei} correctly
compute $T[i,X_1,\ldots,X_{f+1},\ell_1,\ldots,\ell_{f+1}]$, where $X_1\cap
X_{f+1} = \emptyset$, for all $i\in [\timelimit]$, where $i>f$, $X_1
\subseteq [k\timelimit],\ldots,X_{f+1} \subseteq [k\timelimit]$, and $\ell_1\in
[\eta],\ldots,\ell_{f+1} \in [\eta]$. We prove it by induction on $i$. %
that $\timelimit>f$. 
    \begin{sloppypar}
    \emph{Base Case:} $i =f+1$. Note that $T[1,X,\ell]=1$ iff there exists a \colorful set in $\Co{F}$ whose $\beta$-score is $\ell$. Thus, $T[1,X,\ell]$ is correct. Thus, due to \Cref{eq:paramt-f+1}, $T[f+1,X_1,\ldots,X_{f+1},\ell_1,\ldots, \ell_{f+1}]=1$ iff there exists sets $S_1,\ldots,S_{f+1}$ such that $\chi(S_j)=X_j$ and $\beta(S_j)=\ell_j$, where $j\in [f+1]$. Thus, \Cref{eq:paramt-f+1} computes $T[f+1,X_1,\ldots,X_{f+1},\ell_1,\ldots, \ell_{f+1}]=1$ correctly. 
\end{sloppypar}
    \emph{Induction Step:} For $i>f+1$, $T[i,X_1,\ldots,X_{f+1},\ell_1,\ldots, \ell_{f+1}]=1$ is computed using \Cref{eq:paramt-largei}.  Suppose that that the claim is true for $i\leq j-1$, for all $X_1\subseteq [kt],\ldots, X_{f+1}\subseteq [kt]$ such that $X_1\cap X_{f+1}=\emptyset$, and $\ell_1\in [\eta],\ldots, \ell_{f+1}\in [\eta]$. We prove it for $i=j$. Towards this, we show the correctness of \Cref{eq:paramt-largei}. Suppose that $T[i,X_1,\ldots,X_{f+1},\ell_1,\ldots, \ell_{f+1}]=1$. Then, there exists $i$ \colorful sets in $\Co{F}$, say $S_1,\ldots,S_i$ such that $\chi(S_1\cup \ldots \cup S_{i-f})=X_1$, $f(S_{i-f+j})=X_{j+1}$, where $j\in [f]$, and $\utilOp(\beta((S_1,\ldots,S_{i-f})))\geq \ell_1$, $\beta(S_{i-f+j})\geq \ell_j$, where $j\in [f]$. Furthermore, a color appears at most $f$ times consecutively. Thus, $\chi(S_{i-1})\cap \chi(S_1\cup \ldots \cup S_{i-f-1})=\emptyset$. Hence, $\chi(S_1\cup \ldots \cup S_{i-f-1})=X_1\setminus X_f$ as $\chi(S_{i-1})=X_f$. Consider $X'=\chi(S_{i-f})$. Then, $(S_1,\ldots,S_{i-1})$ is a feasible solution for $T[i-1,X_1\setminus X_f,\chi(S_{i-f}),X_2,\ldots,X_f, \ell_1-\beta(S_{i-f}),\beta(S_{i-f}),\ell_2,\ldots,\ell_f]$.  %
    Furthermore $T[1,X_{f+1},\ell_{f+1}]=1$, %
    RHS of \Cref{eq:paramt-largei} is $1$. In the reverse direction, suppose that there exists $X'\subseteq X_1, \ell' \in [\ell_1]$ such that $T[i-1,X_1\setminus X_{f}, X',  X_2,\ldots,X_{f}, \\ \ell_1-\ell',\ell',\ell_2,\ldots,\ell_{f}]\wedge T[1,X_f,\ell_{f+1}]=1$. Thus, there exists sets $S_1,\ldots,S_{i-1}$ such that $\chi(S_1\cup \ldots S_{i-1-f})=X_1\setminus X_f$, $\chi(S_{i-f})=X'$, and $\chi(S_{i-f+j})=X_{j+1}$, where $j\in [f]$,  every color appears at most $f$ times consecutively, and $\utilOp(\beta((S_1\cup \ldots S_{i-1-f})))\geq \ell_1-\ell'$, $\beta(S_{i-f})=\ell'$, and $\beta(S_{i-f+j})=\ell_{j+1}$. Furthermore, there exists a set $S_i$ such that $\chi(S_i)=X_{f+1}$ and $\beta(S_i)=\ell_{f+1}$. Consider $\Co{S}=(S_1,\ldots,S_{i-1},S_i)$. Since $X_1 \cap X_{f+1} = \emptyset$, $X'\subseteq X_1$, $\chi(S_i)\cap \chi(S_1\cup \ldots \cup S_{i-f})=\emptyset$. Furthermore, $\utilOp(\beta((S_1\cup \ldots S_{i-f})))\geq \ell_1$  Thus, $\Co{S}$ is a feasible solution for $T[i,X_1,\ldots,X_f,\ell_1,\ldots,\ell_f]$. Hence, this value is also $1$. Thus, \Cref{eq:paramt-largei} is correct, and the claim follows due to induction hypothesis. 
\end{proof}

\begin{restatable}{lemma}{lem:success-prob}
  Given a \yes-instance $\Co{I}$, the algorithm returns \yes{} with probability
  at least $1/2$, and if the algorithm is given a \no-instance, the algorithm
  returns \no{} with probability $1$.
\label{lem:success-prob}\end{restatable}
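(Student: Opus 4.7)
The plan is to handle the two probability claims separately, leveraging the two previously established technical ingredients: Proposition~\ref{prop:success-prob}, which lower-bounds the probability that a single random coloring separates a fixed small set, and \Cref{lem:prob-colorful}, which states that the Phase~III DP returns \yes{} precisely when a \colorful{} solution exists under the current coloring~$\chi$.

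For the \no-instance direction, I would argue contrapositively. Any \colorful{} solution under any coloring~$\chi$ is, after discarding the coloring, still an $\candquota$-consecutive committee series of size~$\timelimit$ achieving total score at least~$\eta$; hence it is a feasible solution to~$\Co{I}$. Therefore if~$\Co{I}$ is a \no-instance, then for \emph{every} coloring~$\chi$ there is no \colorful{} solution, so by \Cref{lem:prob-colorful} the DP never returns~$1$ and the algorithm reports \failure{} in each repetition. The claim that \no{} is returned with probability exactly~$1$ follows.

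For the \yes-instance direction, fix any solution $\Co{S}=(S_1,\dots,S_\timelimit)$ to~$\Co{I}$ and let $X \coloneqq \biguplus_{i\in[\timelimit]} S_i$, so that $|X|\leq k\timelimit$. By \Cref{prop:success-prob}, a single invocation of Phase~I colors the elements of~$X$ with pairwise distinct colors with probability at least $e^{-k\timelimit}$; when this event occurs, $\Co{S}$ is \colorful{} under~$\chi$, every $S_i$ survives the Phase~II filtration, and by \Cref{lem:prob-colorful} Phase~III outputs \yes{}. The algorithm amplifies this single-shot probability by running $N \coloneqq \lceil (\ln 2)\, e^{k\timelimit} \rceil$ independent repetitions of Phases~I--III and outputting \yes{} if any of them does. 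Using the standard estimate
\[
  (1 - e^{-k\timelimit})^{N} \;\leq\; \exp\!\bigl(-N\, e^{-k\timelimit}\bigr) \;\leq\; e^{-\ln 2} \;=\; \tfrac{1}{2},
\]
the overall failure probability is at most $1/2$, giving the claimed success probability.

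There is no real obstacle here since both phases are already analyzed in preceding lemmas; the only points to verify carefully are (i)~that a \colorful{} solution is indeed a feasible solution to~$\Co{I}$ (the coloring plays no role in the consecutiveness, frequency, or score requirements), and (ii)~that the amplification arithmetic goes through cleanly. The choice of $N = \Theta(e^{k\timelimit})$ is also what combines with the per-iteration DP cost of $\OO^\star(2^{k\timelimit(f+1)})$ to yield the running-time bound advertised in Theorem~\ref{thm:rand-egal-fpt-t}.
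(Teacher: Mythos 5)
Your proposal is correct and follows essentially the same route as the paper's proof: both directions rest on Proposition~\ref{prop:success-prob} for the $e^{-k\timelimit}$ separation probability, \Cref{lem:prob-colorful} for the DP's correctness under a fixed coloring, and independent repetition $\Theta(e^{k\timelimit})$ times to amplify to probability $\nicefrac{1}{2}$ (the paper uses $e^{k\timelimit}$ repetitions and the bound $1-(1-e^{-k\timelimit})^{e^{k\timelimit}} \geq 1 - \nicefrac{1}{e} \geq \nicefrac{1}{2}$, a cosmetic difference from your $(\ln 2)e^{k\timelimit}$). Your \no-instance argument is slightly more explicit than the paper's one-line observation, but it is the same reasoning.
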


\begin{proof}
Suppose that $\Co{I}$ is a \yes-instance, and
$\Co{S}=(S_1,\ldots,S_{\timelimit})$ be one of its solution. Then, due to
\Cref{prop:success-prob}, $\chi$ colors  $\cup_{i\in [\timelimit]}S_i$
distinctly with probability at least $e^{-k\timelimit}$. Thus, due to
\Cref{lem:prob-colorful}, we return \yes{} with probability at least $e^{-k\timelimit}$.  Thus, to boost the success probability to a constant, we repeat the algorithm independently $e^{k\timelimit}$ times. Thus, the success probability is at least $$1-\Big(1-\frac{1}{e^{k\timelimit}}\Big)^{e^{k\timelimit}} \geq 1-\frac{1}{e} \geq \frac{1}{2}$$
    If the algorithm returns a solution, then $\Co{I}$ is clearly a \yes-instance. 
\end{proof}

    Note that we compute $\timelimit\cdot 2^{k\timelimit(f+1)}\cdot \eta^{f+1}$ entries in dynamic programming algorithm and each entry can be computed in $2^{k\timelimit} \eta$ time. Since we repeat the algorithm $e^{k\timelimit}$ times, the algorithm runs in $\OO(2^{k\timelimit(f+1)}(2e)^{k\timelimit}\eta^{f+2})$ time. 
\end{proof}

\utilfpttftwoormore*
\label{app:util-fpt-two-or-more}
\begin{proof}
We derandomize the algorithm of~\Cref{thm:rand-egal-fpt-t} using $(p,q)$-perfect
hash family to obtain a deterministic algorithm for our problem. Towards this,
we first define the notion of $(p,q)$-perfect hash family. 

\begin{definition}[$(p,q)$-perfect hash family]{\rm \cite{alon1995color}}
For non-negative integers $p$ and $q$, a family of functions $f_1,\ldots,f_t$
from a universe $U$ of size $p$ to a universe of size $q$ is called a
$(p,q)$-perfect hash family, if for any subset $S\subseteq U$ of size at most
$q$, there exists $i\in [t]$ such that $f_i$ is injective on $S$. 
\end{definition}

According to the following result, we can construct a $(p,q)$-perfect hash
family.   

\begin{proposition}[\cite{naor1995splitters,ParamAlgorithms15b}]\label{prop:hash family construction}
There is an algorithm that given $p,q \geq 1$ constructs a $(p,q)$-perfect hash
family of size $e^qq^{\OO(\log q)}\log p$ in time $e^qq^{\OO(\log q)}p \log p$.
\end{proposition}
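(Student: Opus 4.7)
The plan is to prove this via the standard two-step composition: first reduce the universe from $[p]$ to one of size $q^{\OO(1)}$ by applying an NSS splitter, and then build a perfect hash family on the reduced universe using a tailored (derandomized) probabilistic construction. Composing a splitter from the first step with a perfect hash function from the second will yield a function that is injective on any fixed $q$-subset provided both of its components behave well on that subset, and the family sizes multiply to the stated bound.

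First I would recall the definition: an $(n,k,\ell)$-splitter is a family of functions $[n]\to[\ell]$ such that for every $k$-subset $W\subseteq[n]$ there exists a function in the family splitting $W$ as evenly as possible across the $\ell$ buckets. For $\ell\geq k$, this forces at most one element of $W$ per bucket, i.e., injectivity on $W$. I would then invoke the Naor--Schulman--Srinivasan construction: a $(p,q,q^2)$-splitter of size $q^{\OO(1)}\log p$ can be constructed in time $q^{\OO(1)} p\log p$. This reduces the hashing problem on $[p]$ to the same problem on the much smaller universe $[q^2]$, at the cost of a multiplicative $q^{\OO(1)}\log p$ factor in the final family size.

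Second, I would construct a $(q^2,q)$-perfect hash family. A uniformly random function $[q^2]\to[q]$ is injective on a fixed $q$-subset with probability $q!/q^q\geq e^{-q}$, so the probabilistic method with a union bound over the $\binom{q^2}{q}\leq q^{2q}$ possible subsets shows a random family of $\OO(e^q \cdot q\log q)$ functions is a $(q^2,q)$-perfect hash family with positive probability. To derandomize, I would use bounded-independence hash families (\emph{\`a la} Schmidt--Siegel) together with the method of conditional expectations on a $q$-wise independent sample space. This yields a family of size $e^q q^{\OO(\log q)}$ constructible in time $e^q q^{\OO(\log q)}\cdot\mathrm{poly}(q)$ on the size-$q^2$ universe (here the extra $q^{\OO(\log q)}$ factor absorbs the overhead of derandomization).

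Finally I would compose: define $\mathcal{F}=\{\,h_2\circ h_1 : h_1\in\mathcal{H}_1,\,h_2\in\mathcal{H}_2\,\}$, where $\mathcal{H}_1$ is the splitter and $\mathcal{H}_2$ the small-universe perfect hash family. For any $q$-subset $S\subseteq[p]$, some $h_1\in\mathcal{H}_1$ is injective on $S$ (so $h_1(S)$ is a $q$-subset of $[q^2]$) and some $h_2\in\mathcal{H}_2$ is injective on $h_1(S)$, so $h_2\circ h_1$ is injective on $S$. The family size is $|\mathcal{H}_1|\cdot|\mathcal{H}_2|= q^{\OO(1)}\log p\cdot e^q q^{\OO(\log q)} = e^q q^{\OO(\log q)}\log p$, and the total construction time is $e^q q^{\OO(\log q)}\,p\log p$, as required. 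The main obstacle is the derandomized construction of the small-universe perfect hash family within the $q^{\OO(\log q)}$ budget; I expect to import Schmidt--Siegel-style $q$-wise independent generators or the perfect-hashing derandomization from the parameterized algorithms textbook as a black box rather than re-derive it, so that the whole argument is essentially a clean composition of two established ingredients.
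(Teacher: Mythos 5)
Your sketch is correct and matches the standard construction from the cited sources (Naor--Schulman--Srinivasan and the parameterized algorithms textbook): the paper itself gives no proof of this proposition but imports it as a black box, and the intended argument is exactly your composition of a $(p,q,q^2)$-splitter of size $q^{\OO(1)}\log p$ with a derandomized $(q^2,q)$-perfect hash family of size $e^q q^{\OO(\log q)}$. The only cosmetic deviation is that the textbook derandomizes the small-universe family via a recursive splitter construction rather than Schmidt--Siegel-style $q$-wise independence, but either route stays within the $e^q q^{\OO(\log q)}$ budget.
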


Instead of taking a random coloring $\chi$, we construct an $(m,kt)$-perfect
hash family $\Co{U}$ using \Cref{prop:hash family construction}. Then, for each
function $g\in \Co{U}$, we invoke the algorithm in \Cref{thm:rand-egal-fpt-t}
with the coloring function $\chi=g$. If there exists a solution
$\Co{S}=(S_1,\ldots,S_\timelimit)$ to $\Co{I}$,  then there exists a function
$g\in \Co{U}$ that is injective on $\cup_{i\in \tau}S_i$, since $\Co{U}$ is an
$(m,k\timelimit)$-perfect hash family. Consequently, due to
\Cref{lem:prob-colorful}, the algorithm returns \yes{}. Hence, we obtain
\Cref{thm:util-fpt-tf2}.
\end{proof}

\egalfpttf*

\begin{proof}
  For the case of~$\candquota=1$, the proof is almost the same as that of
  \Cref{thm:util-fpt-tf1}. The only difference is that here we reduce the
  problem to \textsc{Set Packing} and only keep the set $S$ in the family
  $\Co{F}$ if $\beta(S)\geq \eta$.  The running time is due to the same
  algorithm as that of~\Cref{thm:util-fpt-tf1}.

  By~\Cref{prop:egal2-egal1}, the results for~$\candquota=1$ can be generalized
  to any positive value of~$\candquota$.
\end{proof}

\end{document}